\newif\ifdraft  \draftfalse   
\newif\ifpdflatex  \pdflatexfalse 
\newcounter{tmpthm}
\title{Auto-similarity \\ 
       in rational base number systems}
\author
{
        Shigeki Akiyama\thanks{University of Tsukuba, 1-1-1 Tennodai, Tsukuba, Ibaraki, 350-8571 Japan}~
        \and
        Victor Marsault\thanks{Corresponding author, victor.marsault@telecom-paristech.fr}
        \thanks{Telecom-ParisTech and CNRS, 46 rue Barrault, 75013 Paris, France}~
        \and 
        Jacques Sakarovitch\footnotemark[\value{footnote}]
}
\date{\today}
\newcommandx{\newtheoremy}[3][2={}]{
  \ifthenelse{\equal{#2}{}}{
    \ifcsmacro{#1}{}{\newtheorem{#1}{#3}}
  }{
    \ifcsmacro{#1}{}{\newtheorem{#1}[#2]{#3}}
  }
}
\newcommand{\thmBlockFont}[1]{\sc{#1}}
\newcommand{\RefSeparator}{.}
\newcommand{\generalref}[2]{\ref{#1\RefSeparator#2}}
\newcommand{\generalpageref}[2]{\pageref{#1\RefSeparator#2}}
\newcommand{\generallabel}[2]{\label{#1\RefSeparator#2}}
\newcommand{\PageName}{page}
\newcommand{\AlgorithmName}{Algorithm}
\newcommand{\AlgorithmRefName}{Algorithm}
\newcommand{\AlgorithmRefPrefix}{a}
\newcommand*{\ralgorithm}{\@ifstar{\generalref{\AlgorithmRefPrefix}}{\AlgorithmRefName~\ralgorithm*}}
\newcommand*{\palgorithm}{\@ifstar{\generalpageref{\AlgorithmRefPrefix}}{\PageName~\palgorithm*}}
\newcommand{\CorollaryName}{Corollary}
\newcommand{\CorollaryRefName}{Corollary}
\newcommand{\CorollaryRefPrefix}{c}
\newcommand{\lcorollary}[1]{\generallabel{\CorollaryRefPrefix}{#1}}
\newcommand*{\rcorollary}{\@ifstar{\generalref{\CorollaryRefPrefix}}{\CorollaryRefName~\rcorollary*}}
\newcommand*{\pcorollary}{\@ifstar{\generalpageref{\CorollaryRefPrefix}}{\PageName~\pcorollary*}}
\newcommand{\ConjectureName}{Conjecture}
\newcommand{\ConjectureRefName}{Conjecture}
\newcommand{\ConjectureRefPrefix}{cj}
\newcommand*{\rconjecture}{\@ifstar{\generalref{\ConjectureRefPrefix}}{\ConjectureRefName~\rconjecture*}}
\newcommand*{\pconjecture}{\@ifstar{\generalpageref{\ConjectureRefPrefix}}{\PageName~\pconjecture*}}
\newcommand{\DefinitionName}{Definition}
\newcommand{\DefinitionRefName}{Definition}
\newcommand{\DefinitionRefPrefix}{d}
\newcommand{\ldefinition}[1]{\generallabel{\DefinitionRefPrefix}{#1}}
\newcommand*{\rdefinition}{\@ifstar{\generalref{\DefinitionRefPrefix}}{\DefinitionRefName~\rdefinition*}}
\newcommand*{\pdefinition}{\@ifstar{\generalpageref{\DefinitionRefPrefix}}{\PageName~\pdefinition*}}
\newcommand{\ExampleName}{Example}
\newcommand{\ExampleRefName}{Example}
\newcommand{\ExampleRefPrefix}{e}
\newcommand*{\rexample}{\@ifstar{\generalref{\ExampleRefPrefix}}{\ExampleRefName~\rexample*}}
\newcommand*{\pexample}{\@ifstar{\generalpageref{\ExampleRefPrefix}}{\PageName~\pexample*}}
\newcommand{\LemmaName}{Lemma}
\newcommand{\LemmaRefName}{Lemma}
\newcommand{\LemmaRefPrefix}{l}
\newcommand{\llemma}[1]{\generallabel{\LemmaRefPrefix}{#1}}
\newcommand*{\rlemma}{\@ifstar{\generalref{\LemmaRefPrefix}}{\LemmaRefName~\rlemma*}}
\newcommand*{\plemmam}{\@ifstar{\generalpageref{\LemmaRefPrefix}}{\PageName~\plemma*}}
\newcommand{\PropositionName}{Proposition}
\newcommand{\PropositionRefName}{Proposition}
\newcommand{\PropositionRefPrefix}{p}
\newcommand{\lproposition}[1]{\generallabel{\PropositionRefPrefix}{#1}}
\newcommand*{\rproposition}{\@ifstar{\generalref{\PropositionRefPrefix}}{\PropositionRefName~\rproposition*}}
\newcommand*{\pproposition}{\@ifstar{\generalpageref{\PropositionRefPrefix}}{\PageName~\pproposition*}}
\newcommand{\PropertyName}{Property}
\newcommand{\PropertyRefName}{Property}
\newcommand{\PropertyRefPrefix}{pp}
\newcommand*{\rproperty}{\@ifstar{\generalref{\PropertyRefPrefix}}{\PropertyRefName~\rproperty*}}
\newcommand*{\pproperty}{\@ifstar{\generalpageref{\PropertyRefPrefix}}{\PageName~\pproperty*}}
\newcommand{\QuestionName}{Question}
\newcommand{\QuestionRefName}{Question}
\newcommand{\QuestionRefPrefix}{q}
\newcommand*{\rquestion}{\@ifstar{\generalref{\QuestionRefPrefix}}{\QuestionRefName~\rquestion*}}
\newcommand*{\pquestion}{\@ifstar{\generalpageref{\QuestionRefPrefix}}{\PageName~\pquestion*}}
\newcommand{\RemarkName}{Remark}
\newcommand{\RemarkRefName}{Remark}
\newcommand{\RemarkRefPrefix}{r}
\newcommand{\lremark}[1]{\generallabel{\RemarkRefPrefix}{#1}}
\newcommand*{\rremark}{\@ifstar{\generalref{\RemarkRefPrefix}}{\RemarkRefName~\rremark*}}
\newcommand*{\premark}{\@ifstar{\generalpageref{\RemarkRefPrefix}}{\PageName~\premark*}}
\newcommand{\NotationName}{Notation}
\newcommand{\NotationRefName}{Notation}
\newcommand{\NotationRefPrefix}{n}
\newcommand*{\rnotation}{\@ifstar{\generalref{\NotationRefPrefix}}{\NotationRefName~\rnotation*}}
\newcommand*{\pnotation}{\@ifstar{\generalpageref{\NotationRefPrefix}}{\PageName~\pnotation*}}
\newcommand{\TheoremName}{Theorem}
\newcommand{\TheoremRefName}{Theorem}
\newcommand{\TheoremRefPrefix}{t}
\newcommand{\ltheorem}[1]{\generallabel{\TheoremRefPrefix}{#1}}
\newcommand*{\rtheorem}{\@ifstar{\generalref{\TheoremRefPrefix}}{\TheoremRefName~\rtheorem*}}
\newcommand*{\ptheorem}{\@ifstar{\generalpageref{\TheoremRefPrefix}}{\PageName~\ptheorem*}}
\newcommand{\FigureRefName}{Figure}
\newcommand{\FigureRefPrefix}{f}
\newcommand{\lfigure}[1]{\generallabel{\FigureRefPrefix}{#1}}
\newcommand*{\rfigure}{\@ifstar{\generalref{\FigureRefPrefix}}{\FigureRefName~\rfigure*}}
\newcommand*{\pfigure}{\@ifstar{\generalpageref{\FigureRefPrefix}}{\PageName~\pfigure*}}
\newcommand{\EquationRefName}{Equation}
\newcommand{\EquationRefPrefix}{eq}
\newcommand*{\requation}{\@ifstar{\generalref{\EquationRefPrefix}}{\EquationRefName~\requation*}}
\newcommand*{\pequation}{\@ifstar{\generalpageref{\EquationRefPrefix}}{\PageName~\pequation*}}
\newcommand{\SectionRefName}{Section}
\newcommand{\SectionRefPrefix}{s}
\newcommand{\lsection}[1]{\generallabel{\SectionRefPrefix}{#1}}
\newcommand*{\rsection}{\@ifstar{\generalref{\SectionRefPrefix}}{\SectionRefName~\rsection*}}
\newcommand*{\psection}{\@ifstar{\generalpageref{\SectionRefPrefix}}{\PageName~\psection*}}
\newcommand{\ProblemName}{Problem}
\newcommand{\ProblemRefName}{Problem}
\newcommand{\ProblemRefPrefix}{pb}
\newcommand*{\rproblem}{\@ifstar{\generalref{\ProblemRefPrefix}}{\ProblemRefName~\rproblem*}}
\newcommand*{\pproblem}{\@ifstar{\generalpageref{\ProblemRefPrefix}}{\PageName~\pproblem*}}
\newcommand{\vmrefprefix}[1]{%
  \ifthenelse{\equal{#1}{corollary}}{\CorollaryRefPrefix}{}%
  \ifthenelse{\equal{#1}{definition}}{\DefinitionRefPrefix}{}%
  \ifthenelse{\equal{#1}{example}}{\ExampleRefPrefix}{}%
  \ifthenelse{\equal{#1}{lemma}}{\LemmaRefPrefix}{}%
  \ifthenelse{\equal{#1}{proposition}}{\PropositionRefPrefix}{}%
  \ifthenelse{\equal{#1}{property}}{\PropertyRefPrefix}{}%
  \ifthenelse{\equal{#1}{question}}{\QuestionRefPrefix}{}%
  \ifthenelse{\equal{#1}{remark}}{\RemarkRefPrefix}{}%
  \ifthenelse{\equal{#1}{notation}}{\NotationRefPrefix}{}%
  \ifthenelse{\equal{#1}{theorem}}{\TheoremRefPrefix}{}%
  \ifthenelse{\equal{#1}{figure}}{\FigureRefPrefix}{}%
  \ifthenelse{\equal{#1}{equation}}{\EquationRefPrefix}{}%
  \ifthenelse{\equal{#1}{section}}{\SectionRefPrefix}{}%
}
\newcommand{\vmrefname}[1]{
  \ifthenelse{\equal{#1}{corollary}}{\CorollaryRefName}{}%
  \ifthenelse{\equal{#1}{definition}}{\DefinitionRefName}{}%
  \ifthenelse{\equal{#1}{example}}{\ExampleRefName}{}%
  \ifthenelse{\equal{#1}{lemma}}{\LemmaRefName}{}%
  \ifthenelse{\equal{#1}{proposition}}{\PropositionRefName}{}%
  \ifthenelse{\equal{#1}{property}}{\PropertyRefName}{}%
  \ifthenelse{\equal{#1}{question}}{\QuestionRefName}{}%
  \ifthenelse{\equal{#1}{remark}}{\RemarkRefName}{}%
  \ifthenelse{\equal{#1}{notation}}{\NotationRefName}{}%
  \ifthenelse{\equal{#1}{theorem}}{\TheoremRefName}{}%
  \ifthenelse{\equal{#1}{figure}}{\FigureRefName}{}%
  \ifthenelse{\equal{#1}{equation}}{\EquationRefName}{}%
  \ifthenelse{\equal{#1}{section}}{\SectionRefName}{}%
}
\def\Vhrulefill{\leavevmode\leaders\hrule height 0.7ex depth \dimexpr0.4pt-0.7ex\hfill\kern0pt}
\def\jscompatibility{0}
\def\curlang{en}
\newcommandx{\vmnewcommandx}[5][2=0,3={},5={},usedefault]{
      \ifthenelse{\equal{\jscompatibility}{0}}
      {\newcommandx{#1}[#2][#3]{#4}} 
      {\newcommandx{#1}[#2][#3]{#5}} 
}
\vmnewcommandx{\wlen}[1]{|#1|}
\vmnewcommandx{\cod}[1]{\langle #1 \rangle}
\vmnewcommandx{\floor}[1]{\lfloor #1 \rfloor}
\vmnewcommandx{\ceil}[1]{\lceil #1 \rceil}
\newcommandx{\newcommandy}[5][1=i,3=0,4={}]{%
  \ifthenelse{\isundefined{#2}}{\newcommandx{#2}[#3][#4]{#5}}{%
      \ifthenelse{\equal{#1}{i}}{}{}%
      \ifthenelse{\equal{#1}{o}}{\renewcommandx{#2}[#3][#4]{#5}}{}%
    }%
}
\newcommand{\transpair}[2]{ #1 \xmd | \xmd #2 }
\newcommand{\val}[1]{\widebar{#1}}
\newcommand{\card}[1]{|#1|}
\newcommand{\ssc}[1]{\textbf{\textsc{#1}}}
\newcommand{\set}[1]{\{#1\}}
\newcommand{\Z}{\mathbb{Z}}
\newcommand{\N}{\mathbb{N}}
\newcommand{\R}{\mathbb{R}}
\newcommand{\widebar}{\overline}
\newcommand{\nlb}{\nolinebreak}
\renewcommand{\thmBlockFont}[1]{\ssc{#1}}
\newcommand{\frcurrentcaption}{}
\newcommand{\encurrentcaption}{}
\newcommandy{\vmfigure}[2]{
  \begin{figure}[ht!]
    \centering
    \input{#1/#2}
    \ifthenelse{\equal{\curlang}{fr}}
    {\caption{\frcurrentcaption}}{}
    \ifthenelse{\equal{\curlang}{en}}
    {\caption{\encurrentcaption}}{}%
    
    \lfigure{#2}
  \end{figure}
}
\newcommand{\vmStartTrickI}[3]{#1{#3}#2}
\newcommand{\vmStarTrickII}[4]{#1{#3}{#4}#2}
\newcommand*{\addStartTextModeZ}[3][i]{ 
  \newcommandy[#1]{#2}{\protect\@ifstar{\leavevmode\protect\nlb$\protect#2$}{\protect#3}}
}
\newcommand*{\addStartTextModeI}[2]{
  \newcommand{#1}{\@ifstar{\leavevmode\nlb$\vmStartTrickI{#1}{$}}{#2}}
}
\newcommand*{\addStartTextModeII}[2]{
  \newcommand{#1}{\@ifstar{\leavevmode\nlb$\vmStarTrickII{#1}{$}}{#2}}
}
\newcommand*{\addStarTextModeIII}[2]{
  \newcommand{#1}{\@ifstar{\leavevmode\nlb$\vmStarTrickIII{#1}{$}}{#2}}
} 
\newcommand*{\addMagicMathModeZ}[3][i]{
  \newcommandy[#1]{#2}{{\ifmmode #3 \else \leavevmode\protect\nlb$\protect#3$\fi}}
}
\renewcommand{\leq}{\leqslant}
\renewcommand{\geq}{\geqslant}
\renewcommand{\phi}{\varphi}
\renewcommand{\epsilon}{\varepsilon}
\renewcommand{\mod}{\text{~mod~}}
\newcommand{\e}{\text{\quad}}                 
\newcommand{\ee}{\text{\qquad}}               
\newsavebox{\InterSymbolSpace}
\savebox{\InterSymbolSpace}{\hspace{0.125em}}
\newsavebox{\SideFormulaSpace}
\savebox{\SideFormulaSpace}{\hspace{0.2em}}
\newcommand{\msp}{\usebox{\SideFormulaSpace}} 
\newcommand{\xmd}{\usebox{\InterSymbolSpace}} 
\newcommand{\eqpnt}{\makebox[0pt][l]{\: .}}
\newcommand{\eqvrg}{\makebox[0pt][l]{\: ,}}
\newcommand{\quantvrg}{\, , \;}
\newcommand{\quantsp}{\ee }
\newcommand{\LatinLocution}[1]{{\itshape #1}\xspace}
\newcommand{\cf}{\LatinLocution{cf.}}
\newcommand{\ie}{{that is, }}
\newcommand{\via}{via\xspace}
\newcommand{\UNmbb}{{\mathchoice
{\hbox{$\textstyle\rm 1\kern-0.2em I$}}%
{\hbox{$\textstyle\rm 1\kern-0.2em I$}}%
{\hbox{$\scriptstyle\rm 1\kern-0.15em I$}}%
{\hbox{$\scriptscriptstyle\rm 1\kern-0.1em I$}}%
}}
\newcommand{\Ac}{\mathcal{A}}
\newcommand{\Dc}{\mathcal{D}}
\newcommand{\Lc}{\mathcal{L}}
\newcommand{\Tc}{\mathcal{T}}
\newcommand{\Xc}{\mathcal{X}}
\newcommand{\Yc}{\mathcal{Y}}
\newlength{\ArrowDiagSize}
\newlength{\ArrowDiagWidth}
\newenvironment{SLDiag}%
   {\psset{style=SLDiagStyle}\begin{psmatrix}}%
   {\end{psmatrix}}%
\newcommand{\CDSL}{\begin{SLDiag}}
\newcommand{\CDSLF}{\end{SLDiag}}
\newenvironment{DiagraBig}%
{\psmatrix[colsep=7ex,rowsep=6ex,arrows=->,nodesep=1ex,npos=.45]}%
{\endpsmatrix}
\newcommand{\CDB}{\begin{DiagraBig}}
\newcommand{\CDBF}{\end{DiagraBig}}
\newenvironment{DiagraSmall}%
{\psmatrix[colsep=3ex,rowsep=3ex,arrows=->,nodesep=1ex,npos=.45]}%
{\endpsmatrix}
\newcommand{\CDS}{\begin{DiagraSmall}}
\newcommand{\CDSF}{\end{DiagraSmall}}
\newcommand{\matriceuu}[1]%
    {\begin{pmatrix} #1 \end{pmatrix}}
\newcommand{\matricedd}[4]%
    {\begin{pmatrix} #1 & #2 \\ #3 & #4 \end{pmatrix}}
\newcommand{\vecteurd}[2]%
    {\begin{pmatrix} #1 \\ #2 \end{pmatrix}}
\newcommand{\ligned}[2]%
    {\begin{pmatrix} #1 & #2 \end{pmatrix}}
\newcommand{\matricett}[9]%
    {\begin{pmatrix}  #1 & #2 & #3 \\
                      #4 & #5 & #6 \\
                      #7 & #8 & #9 \end{pmatrix}}
\newcommand{\vecteurt}[3]%
    {\begin{pmatrix} #1 \\ #2 \\ #3 \end{pmatrix}}
\newcommand{\lignet}[3]%
    {\begin{pmatrix} #1 & #2 & #3 \end{pmatrix}}
\newlength{\jsWidthCol}
\newlength{\blocinterligne}
\newlength{\blocinterligned}
\newlength{\temparraycolsep}
\newlength{\longueurbloc}
\newlength{\hauteurbloc}
\newlength{\centragebloc}
\newlength{\longueurblc}
\newlength{\hauteurblc}
\newlength{\centrageblc}
\newcommand{\blocligne}[1]%
    {\framebox[\longueurbloc]{$#1$}}
\newcommand{\blocmatrice}[1]%
    {\framebox[\longueurbloc]{\rule[\centragebloc]{0mm}{\hauteurbloc}$#1$}}
\newcommand{\blocvecteur}[1]%
    {\framebox{\rule[\centragebloc]{0mm}{\hauteurbloc}$#1$}}
\newcommand{\blcligne}[1]%
    {\framebox[\longueurblc]{$#1$}}
\newcommand{\blcmatrice}[1]%
    {\framebox[\longueurblc]{\rule[\centrageblc]{0mm}{\hauteurblc}$#1$}}
\newcommand{\blcvecteur}[1]%
    {\framebox{\rule[\centrageblc]{0mm}{\hauteurblc}$#1$}}
\newcommand{\matriceddblvs}[4]
   {\setlength{\temparraycolsep}{\arraycolsep}%
    \setlength{\arraycolsep}{1.3pt}%
        \left (%
    \begin{array}{cc}%
                #1  & \blcligne{#2} \\
            \blcvecteur{#3} & \blcmatrice{#4}
        \end{array}%
        \right )%
    \setlength{\arraycolsep}{\temparraycolsep}%
   }%
\newcommand{\vecteurdblvs}[2]%
   {\setlength{\temparraycolsep}{\arraycolsep}%
    \setlength{\arraycolsep}{1.5pt}%
        \left (%
    \begin{array}{c}%
                #1  \\
            \blcvecteur{#2}
        \end{array}%
        \right )%
    \setlength{\arraycolsep}{\temparraycolsep}%
   }%
\newcommand{\lignedblvs}[2]%
   {\setlength{\temparraycolsep}{\arraycolsep}%
    \setlength{\arraycolsep}{1.5pt}%
        \left (%
    \begin{array}{cc}%
                #1  & \blcligne{#2}
        \end{array}%
        \right )%
    \setlength{\arraycolsep}{\temparraycolsep}%
   }%
\newcommand{\matricettblvs}[9]
   {\setlength{\temparraycolsep}{\arraycolsep}%
    \setlength{\arraycolsep}{1.5pt}%
        \left (%
    \begin{array}{ccc}%
                #1  & \blcligne{#2} & #3\\
            \blcvecteur{#4} & \blcmatrice{#5} & \blcvecteur{#6}\\
                #7  & \blcligne{#8} & #9\\
        \end{array}%
        \right )%
    \setlength{\arraycolsep}{\temparraycolsep}%
   }%
\newcommand{\vecteurtblvs}[3]%
   {\setlength{\temparraycolsep}{\arraycolsep}%
    \setlength{\arraycolsep}{1.5pt}%
        \left (%
    \begin{array}{c}%
                #1  \\
            \blcvecteur{#2}\\
                #3
        \end{array}%
        \right )%
    \setlength{\arraycolsep}{\temparraycolsep}%
   }%
\newcommand{\lignetblvs}[3]%
   {\setlength{\temparraycolsep}{\arraycolsep}%
    \setlength{\arraycolsep}{1.5pt}%
        \left (%
    \begin{array}{ccc}%
                #1  & \blcligne{#2} & #3
        \end{array}%
        \right )%
    \setlength{\arraycolsep}{\temparraycolsep}%
   }%
\newcommand{\matricettblblvs}[9]
   {\setlength{\temparraycolsep}{\arraycolsep}%
    \setlength{\arraycolsep}{1.5pt}%
        \left (%
    \begin{array}{ccc}%
                #1  & \blcligne{#2} & \blcligne{#3}\\
            \blcvecteur{#4} & \blcmatrice{#5} & \blcmatrice{#6}\\
                \blcvecteur{#7}  & \blcmatrice{#8} & \blcmatrice{#9}\\
        \end{array}%
        \right )%
    \setlength{\arraycolsep}{\temparraycolsep}%
   }%
\newcommand{\vecteurtblblvs}[3]%
   {\setlength{\temparraycolsep}{\arraycolsep}%
    \setlength{\arraycolsep}{1.5pt}%
        \left (%
    \begin{array}{c}%
                #1  \\
            \blcvecteur{#2}\\
                \blcvecteur{#3}
        \end{array}%
        \right )%
    \setlength{\arraycolsep}{\temparraycolsep}%
   }%
\newcommand{\lignetblblvs}[3]%
   {\setlength{\temparraycolsep}{\arraycolsep}%
    \setlength{\arraycolsep}{1.5pt}%
        \left (%
    \begin{array}{ccc}%
                #1  & \blcligne{#2} & \blcligne{#3}
        \end{array}%
        \right )%
    \setlength{\arraycolsep}{\temparraycolsep}%
   }%
\newlength{\DefiTest}\setlength{\DefiTest}{0pt}%
\newlength{\DefiHeightu}\newlength{\DefiHeightd}%
\newlength{\DefiDepthu}\newlength{\DefiDepthd}%
\newcommand{\Defi}[2]%
    {%
     \settoheight{\DefiHeightu}{${\displaystyle #1}$}%
     \settodepth{\DefiDepthu}{${\displaystyle #1}$}%
     \addtolength{\DefiHeightu}{\DefiDepthu}%
     \settoheight{\DefiHeightd}{${\displaystyle #2}$}%
     \settodepth{\DefiDepthd}{${\displaystyle #2}$}%
     \addtolength{\DefiHeightd}{\DefiDepthd}%
     \left\{#1%
     \rule[-\DefiDepthd]{\DefiTest}{\DefiHeightd}%
     \xmd\right|%
     \left.%
     \rule[-\DefiDepthu]{\DefiTest}{\DefiHeightu}%
      #2\right\}%
     }
\newlength{\ColoText}
\newlength{\ColoFigu}
\newlength{\TextFiguSpace}
\newlength{\parindenttemp} 
\newlength{\parskiptemp} 
\newlength{\fboxseptemp} 
\newcommand{\TFBoxing}{}
\newcommand{\TFVertAlig}{}
\newcommand{\LeftLarg}{}
\renewcommand{\LeftLarg}{.66}
\ifdraft\renewcommand{\TFBoxing}{\fbox}\fi
\newcommand{\TxtFg}[3]%
   {%
    \setlength{\ColoText}{#1\textwidth}%
    \setlength{\ColoFigu}{\textwidth}%
    \addtolength{\ColoFigu}{-\ColoText}%
    \addtolength{\ColoText}{-.5\TextFiguSpace}%
    \addtolength{\ColoFigu}{-.5\TextFiguSpace}%
    \ifdraft\setlength{\fboxsep}{0pt}\fi
    \noi
    \TFBoxing{%
       \begin{minipage}[\TFVertAlig]{\ColoText}%
          \setlength{\parindent}{\parindenttemp}%
          \setlength{\parskip}{\parskiptemp}%
          \par\vspace*{0mm}
             #2
       \end{minipage}%
             }%
    \hspace*{\TextFiguSpace}%
    \TFBoxing{%
       \begin{minipage}[\TFVertAlig]{\ColoFigu}%
          \par\vspace*{0mm}%
             #3%
       \end{minipage}%
             }%
    \ifdraft\setlength{\fboxsep}{\fboxseptemp}\fi
   }%
\newcommand{\TextFigu}[3][\LeftLarg]%
   {\renewcommand{\TFVertAlig}{t}\TxtFg{#1}{#2}{#3}}
\newcommand{\TextFiguC}[3][\LeftLarg]%
   {\renewcommand{\TFVertAlig}{c}\TxtFg{#1}{#2}{#3}}
\newcommand{\TextFiguX}[3][\LeftLarg]
   {%
    \setlength{\ColoText}{#1\textwidth}%
    \setlength{\ColoFigu}{\textwidth}%
    \addtolength{\ColoFigu}{-\ColoText}%
    \addtolength{\ColoText}{-.5\TextFiguSpace}%
    \addtolength{\ColoFigu}{-.5\TextFiguSpace}%
    \addtolength{\ColoFigu}{\ETAExtendedLineWidth}
    \ifdraft\setlength{\fboxsep}{0pt}\fi
    \noi
    \ifodd\value{page}%
       \TFBoxing{%
          \begin{minipage}[t]{\ColoText}%
             \RstBLS
             \setlength{\parindent}{\parindenttemp}%
             \setlength{\parskip}{\parskiptemp}%
             \par\vspace*{0mm}
                #2
          \end{minipage}%
                }%
       \hspace*{\TextFiguSpace}%
       \TFBoxing{%
          \begin{minipage}[t]{\ColoFigu}%
             \par\vspace*{0mm}%
                #3%
          \end{minipage}%
                }%
    \else
       \hspace*{-\ETAExtendedLineWidth}
       \TFBoxing{%
          \begin{minipage}[t]{\ColoFigu}%
             \par\vspace*{0mm}%
                #3%
          \end{minipage}%
                }%
       \hspace*{\TextFiguSpace}%
       \TFBoxing{%
          \begin{minipage}[t]{\ColoText}%
             \RstBLS
             \setlength{\parindent}{\parindenttemp}%
             \setlength{\parskip}{\parskiptemp}%
             \par\vspace*{0mm}
                #2
          \end{minipage}%
                }%
    \fi%
    \ifdraft\setlength{\fboxsep}{\fboxseptemp}\fi
   }
\newcommand{\Axio}[1]%
   {\pointn #1\hspace*{.1em}\jspointtiret\hspace*{.4em}\ignorespaces}
\newlength{\sttd}
\newcommand{\ChgStateDiameter}[1]%
    {\setlength{\sttd}{\StateDiam}\FixStateDiameter{#1\sttd}}
\newcommand{\EdgeLabelPos}{}
\newcommand{\ChgEdgeLabelPosit}[1]{\renewcommand{\EdgeLabelPos}{#1}}
\newcommand{\RstEdgeLabelPosit}{\ChgEdgeLabelPosit{\EdgeLabelPosit}}
\newcommand{\SetEdgeLabelPosit}[1]%
   {\renewcommand{\EdgeLabelPosit}{#1}\RstEdgeLabelPosit}
   \renewcommand{\ChgTransLabelSep}[1]
      {\setlength{\TransLabelSP}{#1\TransLabelSep}}
   \renewcommand{\RstTransLabelSep}{\ChgTransLabelSep{1}}
   \renewcommand{\SetTransLabelSep}[1]
      {\setlength{\TransLabelSep}{#1}\RstTransLabelSep}
   \renewcommand{\ChgRelTransLabelSep}[1]
         {\setlength{\TransLabelSP}{#1\TransLabelSP}}
\newcommand{\ChgRelStateLineWidth}[1]%
   {\setlength{\StateLineWid}{#1\StateLineWid}}%
\renewcommand{\ChgRelEdgeLineWidth}[1]
   {\setlength{\EdgeLineWid}{#1\EdgeLineWid}}
\renewcommand{\VaucEdgeLabel}[1]%
       {\textcolor{\EdgeLabelCol}%

{\scalebox{\EdgeLabelSca}{\scalebox{\EdgeLabelScale}{{\rput{*0}(0,0){$ #1 $}}}}}}
\newcommand{\ThreeHalfChild}[2]{}%
\newcommand{\StateDiam}{1cm}
\newcommand{\ExtnF}[1]%
   {\overset{{\scriptscriptstyle \pmb{\smile}}}{#1}}
\newcommand{\DiffF}[1]%
   {\overset{{\scriptscriptstyle \pmb{\lor}}}{#1}}
\newcommand{\LocaF}[1]%
   {\overset{{\scriptscriptstyle \leftrightarrow}}{#1}}
\newcommand{\jsDist}[2][{}]%
   {\operatorname{\mathbf{d}_{#1}}\left(#2\right)}
\renewcommand{\lim}{{\operatornamewithlimits{\mathsf{lim}}}}
\newcommand{\Pfrak}{\mathfrak{P}}
\newcommand{\jsPart}[1]{{\operatorname{\Pfrak}\left(#1\right)}}
\newcommand{\x}{\! \times \!}
\newcommand{\SerSAnMon}[2]%
    {#1 \langle \! \langle  #2  \rangle \! \rangle }
\newcommand{\SerSAnMonD}[2]%
    {\left[#1\right] \langle \! \langle  #2  \rangle \! \rangle }
\newcommand{\SerMon}[1]%
    {\!\langle \! \langle  #1  \rangle \! \rangle }
\newcommand{\PolSAnMon}[2]%
    {{#1 \langle  #2 \rangle }}
\newcommand{\PolMon}[1]%
    {{\!\langle  #1 \rangle }}
\newsavebox{\LeftBraket}
\savebox{\LeftBraket}{\scalebox{0.7 1.2}{$<$}}
\newsavebox{\RightBraket}
\savebox{\RightBraket}{\scalebox{0.7 1.2}{$>$}}
\newcommand{\jsStar}[1]{{{#1}^{*}}}
\newcommand{\Ae}{\jsStar{A}}
\newcommand{\jsPlus}[1]{{{#1}^{+}}}
\newcommand{\Ap}{\jsPlus{A}}
\newcommand{\iotaK}{\iota_{\ShiftInd{K}}}
\newcommand{\compos}{\ccdot }
\newcommand{\phiikpsi}%
{{\varphi ^{-1}\! \compos        \iotaK \! \compos \! \psi }}
\newcommand{\phiiotpsi}[1]%
{{\varphi ^{-1}\! \compos        \iota _{\ShiftInd{#1}} \! \compos \! \psi }}
\newcommand{\phiintkpsi}[1]%
{{(#1\varphi ^{-1}\! \cap K) \psi }}
\newcommand{\jsless}
   {\mathrel{\leqslant_{\!\!\!\!\scriptscriptstyle{/}}}}
\newcommand{\jsgrea}
   {\mathrel{\geqslant_{\!\!\!\!\scriptscriptstyle{\backslash}}}}
\newcommand{\lexiconeq}
   {\preccurlyeq_{\!\!\!\!\!\scalebox{1.8 1}{\scriptscriptstyle{\pmb{/}}}}}
\newcommand{\jsAutUn}[1]%
   {\mbox{$\left\langle \thinspace #1 \thinspace \right\rangle $}}
\newcommand{\aut}[1]{\jsAutUn{#1}} 
\newcommand{\ShiftInd}[1]{\raisebox{-0.3ex}{$\scriptstyle{#1}$}}
\newcommand{\actb}{\mathbin{\raisebox{0.2ex}%
                        {${\scriptscriptstyle \circ} $}}}
\newcommand{\ccdot}{\actb} 
\newlength{\vbh}\newlength{\vbd}\newlength{\vbt}%
\newcommand{\CompAuto}[1]%
    {%
     \settodepth{\vbd}{\mbox{$\displaystyle{#1\strut}$}}%
     \settoheight{\vbh}{\mbox{$\displaystyle{#1\strut}$}}%
     \setlength{\vbt}{\vbh}\addtolength{\vbt}{\vbd}%
     {}%
     \psline[linewidth=0.8pt]{c-c}(0,-.65\vbd)(0,.9\vbh)%
     \hspace*{0.7pt}%
     {#1}%
     \kern0.8pt%
     \psline[linewidth=0.8pt]{c-c}(0,-.65\vbd)(0,.9\vbh)%
     }%
\newcommand{\bornedeuxlignes}[2]%
{\mbox{$
\begin{array}{c}{\scriptstyle #1}\\ {\scriptstyle #2} \end{array}
       $}}
\newcommand{\ExpDer}[2][a]%
    {\operatorname{\frac{\partial}{\partial \mbox{$#1$}}}#2}
\newcommand{\ExpDerP}[2][a]%
    {\operatorname{\frac{\partial}{\partial\mbox{$#1$}}}\left(#2\right)}
\newcommand{\ExpDerr}[2][a]%
    {\operatorname{\frac{\partial_{\mathrm{R}}}{\partial \mbox{$#1$}}}#2}
\newcommand{\ExpDerB}[2][a]%
   {\operatorname{\frac{\partial_\mathsf{b}}{\partial \mbox{$#1$}}}#2}
\newcommand{\ExpDerBP}[2][a]%
   {\operatorname{\frac{\partial_\mathsf{b}}{\partial \mbox{$#1$}}}\left(#2\right)}
\renewcommand{\x}{\xmd\! \times \!\xmd}
\newcommand{\pqRep}[1]{\langle #1 \rangle_{\frac{p}{q}}}
\newcommand{\THFracs}[2]{\frac{#1}{#2}}%
\newcommand{\pqs}{\THFracs{p}{q}}
\newcommand{\Indpq}[1]{#1_{\pqs}}%
\newcommand{\Indtd}[1]{#1_{\tds}}%
\newcommand{\tds}{\THFracs{3}{2}}
\newcommand{\Tpq}{\Indpq{T}}
\newcommand{\Ttd}{\Indtd{T}}
\newcommand{\otd}{\Indtd{\omega}}
\newcommand{\obtd}{\Indtd{\widebar{\omega}}}
\newif\ifshowpicture
\newcommand{\ShowPicture}{\showpicturetrue}
\newcommand{\HidePicture}{\showpicturefalse}
\renewcommand{\val}[1]{\pi\hspace*{-0.1em}\left(#1\right)}
\newcommand{\realval}[1]{%
  \ifthenelse{\equal{#1}{}}%
  {\rho}
  {\rho\hspace*{-0.1em}\left(#1\right)}%
}
\newcommand{\realvalS}[1]{\rho(#1)}
\newcommand{\minword}[1]{w^{-}_{#1}}
\newcommand{\maxword}[1]{w^{+}_{#1}}
\newcommand{\wlength}[1]{|#1|}
\newcommand{\bceil}[1]{\left\lceil#1\right\rceil}
\newcommand{\behav}[1]{L\hspace*{-0.1em}\left(#1\right)}
\newcommand{\maxletter}{\mathtt{maxLetter}}
\newcommand{\intint}[2]{\llbracket#1,#2\rrbracket}
\newcommand{\nspan}{\mathtt{span}}
\renewcommand{\mod}{\xmd\text{\scriptsize\bf \%}\xmd}
\newcommand{\modq}{\mod{q}}
\newcommand{\Span}{\mathsf{S}}
\newcommand{\Spanpq}{\Indpq{\Span}}
\newcommand{\Ape}{\Ap^{*}}
\newcommand{\Ao}{A^{\omega}}
\newcommand{\infbhv}[1]{\Lc\left(#1\right)}
\newcommand{\dgwm}{\mathrel{\ominus}}
\newcommand{\pqaux}{\frac{p}{q}}
\newcommand{\Spqaux}{\Indpq{\mathtt S}}
\newcommand{\Lpqaux}{\Indpq{L}}
\newcommand{\LpqBaux}{L_{\frac{p}{q},B}}
\newcommand{\Lrpqaux}{\Indpq{L}^{(r)}}
\newcommand{\Bpqaux}{B_{p,q}}
\newcommand{\Wpqaux}{\Indpq{W}}
\newcommand{\Wpqnaux}{W_{\frac{p}{q},n}}
\newcommand{\Wnaux}{W_{n}}
\newcommand{\Apaux}{A_p}
\newcommand{\Apsaux}{A_p^{\xmd *}}
\newcommand{\Apoaux}{A_p^{\xmd \omega}} 
\newcommand{\Aqoaux}{A_q^{\xmd \omega}} 
\newcommand{\Aqaux}{A_q}
\newcommand{\Aqsaux}{A_q^{\xmd *}} 
\newcommand{\Acaux}{\mathcal A}
\newcommand{\Tpqaux}{\Indpq{\mathcal{T}}}
\newcommand{\Dpqaux}{\Indpq{\mathcal{D}}}
\newcommand{\Mpqaux}{\Dpqaux}
\newcommand{\Thataux}{\widehat{\Tpqaux}}
\newcommand{\taupqaux}{\Indpq{\tau}}
\addStartTextModeZ{\pq}{\pqaux}
\addStartTextModeZ{\Mpq}{\Mpqaux}
\addStartTextModeZ{\Spq}{\Spqaux}
\addStartTextModeZ{\Lpq}{\Lpqaux}
\addStartTextModeZ{\LpqB}{\LpqBaux}
\addStartTextModeZ{\Bpq}{\Bpqaux}
\addStartTextModeZ{\Lrpq}{\Lrpqaux}
\addStartTextModeZ{\Wpq}{\Wpqaux}
\addStartTextModeZ{\Wpqn}{\Wpqnaux}
\addStartTextModeZ{\Wn}{\Wnaux}
\addStartTextModeZ{\Aps}{\Apsaux}
\addStartTextModeZ{\Apo}{\Apoaux}
\addStartTextModeZ{\Aq}{\Aqaux}
\addStartTextModeZ{\Aqs}{\Aqsaux}
\addStartTextModeZ{\Aqo}{\Aqoaux}
\addStartTextModeZ{\taupq}{\taupqaux}
\newcommand{\rhythmaux}[1]{\widebar{#1}}
\addStartTextModeI{\rhythm}{\rhythmaux}
\newcommandx{\yaHelper}[2][1=\empty]{%
\ifthenelse{\equal{#1}{\empty}}%
  { \ensuremath{\scriptstyle{#2}}} 
  { \raisebox{ #1 }[0pt][0pt]{\ensuremath{\scriptstyle{#2}}}}  
}
\newcommandx{\yrightarrow}[4][1=\empty, 2=\empty, 4=\empty, usedefault=@]{%
  \ifthenelse{\equal{#2}{\empty}}
  { \xrightarrow{ \protect{ \yaHelper[#4]{#3} } } } 
  { \xrightarrow[ \protect{ \yaHelper[#2]{#1} } ]{ \protect{ \yaHelper[#4]{#3} } } } 
}
\newcommand{\figur}[1]{Fig.\xmd\ref{f.#1}}
\newcommand{\theor}[1]{Theorem~\ref{t.#1}}
\newcommand{\secti}[1]{Sect.\xmd\ref{s.#1}}
\newcommand{\dex}[1]{(#1)}
\newlength{\vmabs}
\newlength{\vmord}
\begin{document}
\tolerance=5000
\maketitle
\isodate

\begin{abstract}
This work is a contribution to the study of set of the representations of 
  integers in a rational base number system.
This prefix-closed subset of the free monoid is naturally represented 
  as a highly non regular tree whose nodes are the integers and whose 
  subtrees are all distinct. 
With every node of that tree is then associated a minimal infinite 
  word.

The main result is that a sequential transducer which computes for 
  all~$n$ the minimal word associated with~$n+1$ from the one 
  associated with~$n$, has essentially the same underlying graph as 
  the tree itself.  
  
These infinite words are then interpreted as representations of real 
  numbers; the difference between the numbers represented by these 
  two consecutive minimal words is the called the span of a
  node of the tree. 
The preceding construction allows to characterise the topological 
  closure of the set of spans.
\end{abstract}

\pagestyle{plain}     
\thispagestyle{plain} 

\section{Introduction}\lsection{intro}

The purpose of this work is a further exploration and a better
  understanding of the set of \emph{words} that represent integers in a
  rational base number systems.
These numeration systems have been introduced and studied
  in~\cite{AkiyEtAl08}, leading to some progress in the results around
  the so-called Malher's problem (\cf\cite{Mahl68}).
We give below a precise definition of rational base number
  systems and of the representation of numbers in such a system. 
But one can hint at the results established in this paper by just 
  looking at the figure showing the `representation tree' of the 
  integers -- that is, the compact way of describing the words that 
  represent the integers -- in a rational base number system 
  (\figur{rep-tre-sys}\dex{b} for the base~$\tds$) and by comparison 
  with the representation tree in a integer base number system   
  (\figur{rep-tre-sys}\dex{a} for the base~$3$).

Every subtree in the second tree is the full ternary tree whereas every 
  subtree in the first one is different from all other subtrees.
As a result, the language of the representations of the integers is 
  not only a non regular language, but the situation is even worse as 
  this language indeed satisfies no iteration 
  lemma of any kind (\cite{MarsSaka13b}).
With the hope of finding some order or regularity within what seems 
  to be closer to complete randomness (which, on the other hand, is not 
  established either) we consider the \emph{minimal words} originating 
  from every node of the tree.

In the case of an integer base, this is perfectly uninteresting: all 
  these minimal words are equal to~$0^{\omega}$.
In the case of a rational base these words are on the contrary all 
  distinct, none are even ultimately periodic (as the other infinite 
  words in the representation tree).
In order to find some invariant of all these distinct words, or at least a 
  relationship between them, we have studied the function that maps 
  the minimal word~$\minword{n}$ 
  associated with~$n$ onto the one associated with~$n+1$.
We tried to describe this function by a possibly infinite transducer.

  
 \ShowPicture
\begin{figure}[ht!]
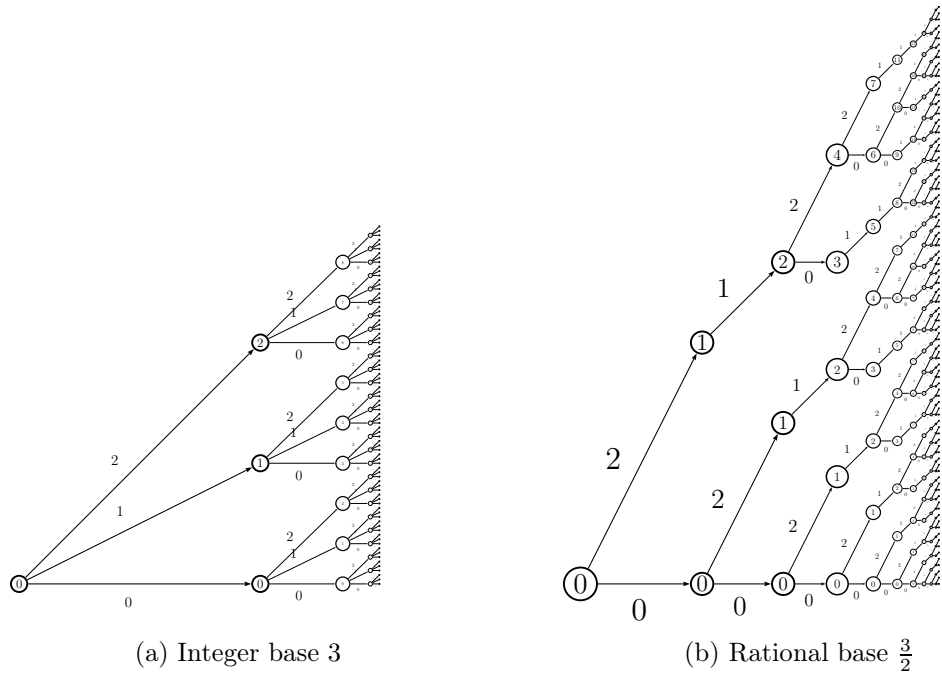

\FixVCScale{0.4}%
  \centering
  \begin{subfigure}{0.49\linewidth}
    \VCCall{p3_ro.tex}%
    \caption{Integer base~$3$}
  \end{subfigure}\ee%
  \begin{subfigure}{0.49\linewidth}
    \VCCall{p3q2_ro.tex}
    \caption{Rational base~$\tds$}
  \end{subfigure}
  \caption{Representation trees in two number systems}
  \lfigure{rep-tre-sys}
\end{figure}
 \ShowPicture

The computation of such a transducer in the case the base~$\tds$, and 
  more generally in the case of a base~$\pqs$ with~$p=2q-1$, leads to 
  a surprising and unexpected result.
The transducer, denoted by~$\Dpq$, is obtained by replacing in the 
  representation tree, denoted by~$\Tpq$, the label of every edge by a set 
  of pairs of letters that depends upon this label only.
In other words, the \emph{underlying 
  graphs} of~$\Tpq$ and~$\Dpq$ \emph{coincide}, and~$\Dpq$ is obtained 
  from~$\Tpq$ by a \emph{substitution} from the alphabet of digits into the 
  alphabet of pairs of digits, in this special and remarkable case.
  
The general case is hardly more difficult to describe, once it has 
  been understood. 
In the special case, the canonical digit alphabet has $p=2q-1$ 
  elements; in the general case, we still consider a digit alphabet 
  with~$2q-1$ elements denoted by~$\Bpq$, either by keeping the 
  larger~$2q-1$ elements of  
  the canonical digit alphabet, when~$p$ is is greater than~$2q-1$, 
  or by enlarging the canonical alphabet with enough negative digits, 
  when~$p$ is is smaller than~$2q-1$; in both cases, $p-1$ is the 
  largest digit. 
  
From~$\Tpq$ and with the digit alphabet~$\Bpq$, we then define 
  another `representation graph' denoted by~$\That$: either by 
  \emph{deleting the edges}  
  of~$\Tpq$ labelled by digits that do not belong to~$\Bpq$ in the 
  case where~$p>2q-1$ or, in the case where~$p<2q-1$ by \emph{adding 
  edges} labelled with the new negative digits. 
Then,~$\Dpq$ is obtained from~$\That$ exactly as above, by a 
  \emph{substitution} from the alphabet of digits into the  
  alphabet of pairs of digits.
This construction of~$\Dpq$, which we call the \emph{derived transducer}, 
  and the proof of its correctness are presented in \secti{der-tra}. 

In~\cite{AkiyEtAl08}, the tree~$\Tpq$, which is built from the representations 
  of integers, is used to \emph{define} the representations of real 
  numbers: the label of an infinite branch of the tree is the 
  development `after the decimal point' of a real number and the 
  drawing of the tree as a fractal object --- like in 
  \figur{rep-tre-sys} --- is fully justified by this point of view.
The same idea leads to the definition of the (renormalized%
\footnote{%
  The classical definition of span of the node~$n$ is, in the 
    fractal drawing, the width of the subtree rooted in~$n$. 
  This value is obviously decreasing (exponentially) with the depth of the
    node~$n$, hence the span of two nodes cannot be easily compared.
  In this article we only consider the \emph{renormalized span} 
    which is the span multiplied by~$(\frac{p}{q})^k$, 
    where~$k$ is the depth of the node~$n$.
})
  \emph{span} of a  node~$n$ of the representation tree:   
  it is the difference between 
  the real represented respectively by the maximal and the minimal 
  words originating in the node~$n$.
  
  
Again, this notion is perfectly uninteresting in the case of an 
  integer base~$p$: the span of node~$n$ is~always~$1$. 
And again, the notion is far more richer and complex in the case of a 
  rational base~$\pqs$.
The trivial relationship between the minimal word originating at 
  node~$n+1$ and the maximal word originating at node~$n$ leads to 
  the connexion between the construction of the derived 
  transducer~$\Dpq$ and the description of the set of 
  spans~$\Spanpq$. 
Not only the \emph{digit-wise difference} between maximal and minimal 
  words is written on the alphabet~$\Bpq$, but all these `difference 
  words' are infinite branches in the tree~$\That$. 
This is explained in \secti{spa-nod}.
From the structure of~$\That$, it then follows (\theor{spa-top}) 
  that the topological closure of~$\Spanpq$ is an interval in the 
  case where~$p<2q-1$, and a set with empty interior in the case 
  where~$p>2q-1$.
  
With every node~$n$ of the tree structure~$\Tpq$ is associated the 
  infinite minimal word~$\minword{n}$, an irregular infinite word 
  that looks as complex as the whole tree.
In conclusion, we have shown that a straightforward computation 
  of~$\minword{n+1}$ 
  from~$\minword{n}$ require the same structure 
  as~$\Tpq$ itself -- despite the fact that every minimal word 
  looks as complex as the whole tree -- whether it be performed directly 
  on the words, or 
  indirectly \via the span of the nodes.
It is this phenomenon that we call \emph{auto-similarity} of the 
  structure~$\Tpq$.
In this process, the value cases~${p=2q-1}$ appear to mark the boundary 
  between two different behaviour, in a more deeper way than that was 
  described in the first study of rational base number 
  systems~\cite{AkiyEtAl08}.
  
This paper is meant to be self-contained and gives, in particular, 
  all necessary definitions concerning rational base number systems. 
However, our paper~\cite{AkiyEtAl08} where these systems have been 
  defined and the sets of representations first studied will probably 
  be useful.  

\section{Preliminaries and notations}\lsection{prelim}

\subsection{Numbers and words}\lsection{words}

Given two \emph{real numbers}~$x$ and~$y$, 
  we denote by~$x/y$ or~$\frac{x}{y}$ their 
  division in~$\R$ (even if~$x$ or~$y$ happened to be integers),
  by~$[x,y]$ the corresponding interval of~$\R$ and by~$\ceil{x}$
  the integer~$n$ such that~$(n-1)<x\leq n$.
On the other hand, given two \emph{positive integers}~$n$ and~$m$, 
  we denote by~$n\div m$ and~$n \mod m$ respectively the quotient and 
  the remainder of the Euclidean division of~$n$ by~$m$, \ie ~$\msp {n=(n\div m)\xmd m + (n\mod{m})} \msp$ 
  and~$\msp 0\leq (n\mod{m}) < m$.
Additionally, we denote by~$\intint{n}{m}$ the integer 
  interval~$\set{n,(n+1),\ldots, m}$.


An \emph{alphabet} is a finite set of symbols called \emph{letters} 
  or \emph{digits} when they are integers.
Given an alphabet~$A$, we consider
both \emph{finite and infinite
  words} over~$A$ respectively denoted by~$\Ae$ and $\Ao$.
As in most cases letters will be digits, we denote the \emph{empty 
  word} by~$\varepsilon$.
For every positive integer~$p$, we denote by~$\Ap$ the canonical digit 
  alphabet of the base~$p$ number system:~${\Ap=\set{0, 1,\ldots,p-1}}$.
For clarity, we as much as possible denote finite words by~$u,v$ and 
  infinite words by~$w$.
The concatenation of two words~$u,v$ is either explicitly denoted by a low dot,
  as in $u.v$,
  or implicitly when there is no ambiguity, as in~$u\xmd v$.
A finite word~$u$ is said to be a prefix of a finite word~$v$ (resp. 
  an infinite word~$w$) if there exists a finite word~$v'$ (resp. an 
  infinite word~$w'$) 
  such that~$v=u\xmd v'$ (resp.~$w=u\xmd w'$).
The set of subsets of an alphabet~$A$ is denoted by~$\jsPart{A}$.

\subsection{Automata and transducers}\lsection{automata}

We deal here with a very special class of automata 
  and transducers only: they are infinite, their state set is~$\N$, 
  they are \emph{deterministic} (or \emph{letter-to-letter} and 
  \emph{sequential}), the initial state is~$0$, and all states are 
  final.

As usual, an \emph{automaton}~$\Xc$ over~$A$ is denoted by 
  a $5$-tuple~$\msp{\Xc=\aut{\N,A,\delta,0,\N}}$, where 
  $\msp{\delta:\N\x A\rightarrow \N}\msp$ is the \emph{transition 
  function}. 
The partial function~$\delta$ is extended to~$\N\x\Ae$, 
  and~$\delta(n,u)=m$ is also denoted by~$\msp{n\cdot u}=m\msp$ or 
  by~$n\pathx{u}m$.  
Given an integer~$n$, every state~$n\cdot a$ for some~$a$ 
  in~$A$ is called a \emph{successor} of~$n$.
A word~$u$ in~$\Ae$ (resp.~a word~$w$ in~$\Ao$) is \emph{accepted} 
  by~$\Xc$ if~$\msp{0\cdot u}\msp$ exists 
  (resp.~if~$\msp{0\cdot v}\msp$ exists for every finite prefix~$v$ of~$w$).
The \emph{language} of \emph{finite} words (resp. of \emph{infinite} 
  words) accepted by~$\Xc$ is denoted by~$\behav{\Xc}$ (resp. 
  by~$\infbhv{\Xc}$~). 


For transducers, we essentially use the notation of \cite{Bers79}, 
  adapted for the infinite case.
A \emph{transducer} is an automaton whose transitions are labelled by 
  pair of letters, it is formally a 
  tuple~${\Yc=\aut{\N,A \x B,\delta,\eta,0,\N}}$
  where~$\aut{\N,A,\delta,0,\N}$ is an 
  automaton, called \emph{the underlying input 
  automaton} of~$\Yc$,~$A$ is 
  called \emph{the input alphabet},~$B$ is the 
 \emph{output alphabet} and~$\eta\colon\N\x A\rightarrow B$ 
 is the \emph{output function}.
The transition function~$\delta$ is extended as in automata, and~$\eta$ is 
  as usual extended to~$\N\x A^* \rightarrow B^*$ 
  by~$\eta(n,\epsilon)=\epsilon$
  and~${\eta(n,u\xmd a)=\eta(n,u).\eta(n\cdot u,a)}$,
  and~$\eta(n,u)$ is also denoted by~$\msp n\ast u \msp$ for short.

Moreover, given two finite words~$u$ and~$v$, 
  we denote by~$n \pathx{\transpair{u}{v}} m$ the combination 
  of~$\msp{n\cdot u = m}\msp$ and~$\msp{n\ast u=v}\msp$.
We say that the \emph{image} 
  of a finite word~$u$ by~$\Yc$, 
  denoted by~$\Yc(u)$, is the word~$v$, if it exists, such 
  that~$0\pathx{\transpair{u}{v}}k$ for some~$k$. 
Similarly, the image of the infinite word~$w$ is~$w'$~if, for every 
  finite prefix~$u$ of~$w$,~$\Yc(u)$ is a prefix of~$w'$.

%
  
\subsection{Rational base number system}\lsection{rat-base}
Let~$p$ and~$q$ be two co-prime integers such that~$p>q>1$.
Given a positive integer~$N$, let us define~$N_0=N$ and, for 
all~$i>0$,
\begin{equation}
  q\xmd N_i = p\xmd N_{(i+1)} + a_i 
  \eqvrg
  \notag
\end{equation}
  where~$a_i$ is the remainder of the Euclidean 
  division of $q\xmd \N_i$ by~$p$,
  hence in~${\Ap=\intint{0}{p-1}}$.
Since~$p>q$, the sequence~$(N_i)_{i\in\N}$ is strictly decreasing and 
eventually stops at~$N_{k+1}=0$. 
Moreover, it holds that
\begin{equation}
  N = \sum^{k}_{i=0} \frac{a_i}{q} \left( \pq \right)^i 
  \eqpnt
  \notag
\end{equation}

The \emph{evaluation function}~$\pi$ is derived from this formula. 
Given a word~${a_na_{n-1}\cdots a_0}$ over~$\Ap$, and indeed over any 
  alphabet of digits,  its \emph{value} is defined by
\begin{equation}\label{eq.pi}
  \val{a_na_{n-1}\cdots a_0} = 
  \sum^{n}_{i=0} \frac{a_i}{q} \left( \pq \right)^i
  \eqpnt 
\end{equation}

Conversely, a word~$u$ in~$\Ape$ is called a~$\pq$-\emph{representation} of
  an integer~$x$ 
  if~${\val{u}=x}$.
Since the representation is unique up to leading 0's
  (see~\cite[Theorem~1]{AkiyEtAl08}),~$u$ is denoted by~$\cod{x}_{\pq}$
  (or~$\cod{x}$ for short) and 
  can be computed with the 
  modified Euclidean division algorithm above.
By convention, the representation of 0 is the empty word~$\epsilon$.
The set of $\pq$-representations of integers is denoted by~$\Lpq$:
\begin{equation}
	\Lpq = \Defi{\pqRep{n}}{n\in\N} \eqpnt
  \notag
\end{equation}

It should be noted that a rational base number system is \emph{not} 
  a~$\beta$-numeration --- where the representation of a number is 
  computed by the (greedy) R\'enyi algorithm 
  (cf.~\cite[Chapter~7]{Loth02}) --- in the special case
  where~$\beta$ is a rational number.
In such a system, the digit set is~$\set{0,1, \ldots, \ceil{\pq}}$
  and the weight of the~$i$-th leftmost
  digit is~$(\pq)^i$; whereas in the rational base number system, 
  they are~$\set{0,1~\ldots(p-1)}$ and~$\frac{1}{q}(\pq)^i$ respectively.


It is immediate that~$\Lpq$ is prefix-closed (since, in the modified Euclidean
  division algorithm~$\cod{N}=\cod{N_1}.a_0$) and prolongable
  (for every representation~$\cod{n}$, there exists (at least) an~$a$ 
  in~$\Ap$ such 
  that~$q$ divides~$(np+a)$ and 
  then~${\cod{\frac{np+a}{q}}=\cod{n}.a}$).
As a consequence,~$\Lpq$ can be represented as an infinite tree 
  (cf.~\rfigure{l32}).

\begin{figure}
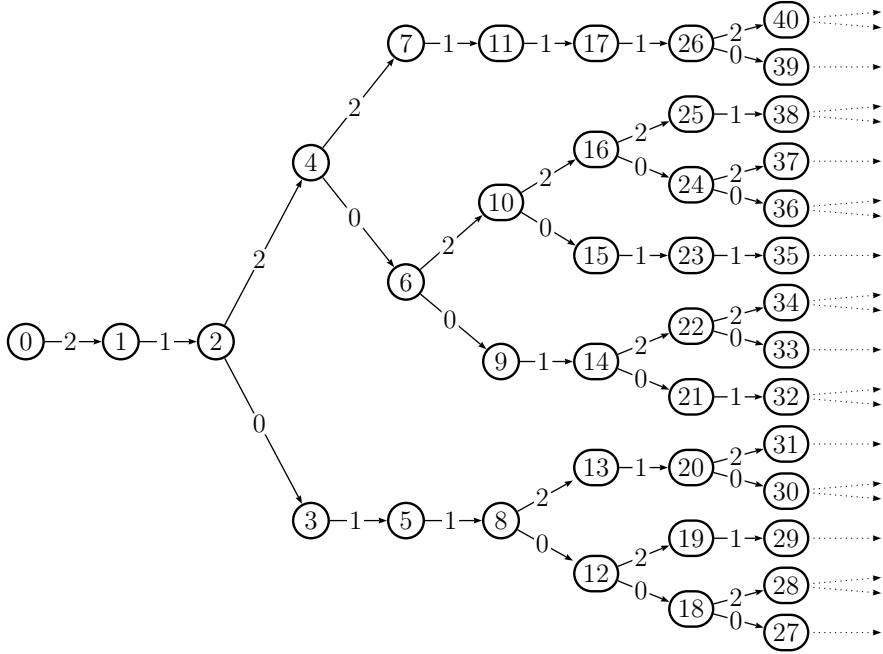

  \FixVCScale{0.5}%
  \setlength{\vmord}{1.25cm}
  \setlength{\vmabs}{2.5cm}
  \centering
  \VCCall{p3q2_L_leq40.tex}%
  \caption{The tree representation of the language~$L_{\frac{3}{2}}$}
  \lfigure{l32}
\end{figure}

It is known that \Lpq* is not rational (not even context-free), and
  the following automaton (in fact accepting the 
  language~$\msp0^*\Lpq$) is infinite.

\begin{definition}\ldefinition{tpq}%
Let~$\taupq\colon\N\x\Z\rightarrow\N$ be the (partial) function 
defined\/\footnote{
              The function $\taupq$ is  defined on~$\N\x\Z$ 
              instead of~$\N\x\Ap$ in anticipation of 
              future developments.
              }
 by:	
\begin{equation}\label{eq.tau}
  \forall n\in\N\quantvrg
    \forall a \in \Z\quantsp
	\taupq(n,a)=\left(\frac{n\xmd p+a}{q}\right) \ee 
    \text{if~$(n\xmd p+a)$ is divisible by~$q$.}
\end{equation}
We denote\/\footnote{%
     In \cite{AkiyEtAl08}, \Tpq* is denoted an infinite directed 
	 tree. 
     The labels of the (finite) paths starting from the root 
	 precisely formed the language~$\msp 0^*\Lpq$, as 
	 is~$\behav{\Tpq}$ in our case.} 
 by~\Tpq* the automaton~$\Tpq=\aut{\N,\Ap,\taupq,0,\N}$.
\end{definition}

In~$\Tpq$, we then have the transitions
  $\msp n\pathx{a}\left(\frac{n\xmd p+a}{q}\right)\msp$
  for every~$n$ in~$\N$, and every~$a$ in~$\Ap$ such that~$(np+a)$ is 
  divisible by~$q$.
The tree representation of \Lpq*, as in \rfigure{l32} augmented by
  an additional loop 
  labelled by~0 on the state~0 becomes a representation of~$\Ttd$.

\medskip

We call \emph{minimal alphabet} (resp. \emph{maximal alphabet}) the 
subalphabet~$\Aq=\intint{0}{(q-1)}$ (resp. the 
subalphabet~~$\intint{(p-q)}{(p-1)}$) of~$\Ap$.
Any letter of~$A_q$ is then called a \emph{minimal letter}, 
  \emph{maximal letter}
  being defined analogously.
The definition of~$\taupq$ implies that every state of \Tpq* has 
a successor by a \emph{unique} minimal (resp. maximal) letter.

\begin{definition}[minimal word]\ldefinition{minword}
  A minimal word (in the \pq*-system) is an infinite word in \Aqo*
    labelling an (infinite) path of \Tpq* (not 
    necessarily starting from the initial state~$0$).
\end{definition}
  
It is immediate that there exists a unique 
  infinite word in \Aqo* starting from the state~$n$ of \Tpq*. 
We call this word \emph{the} minimal word
    associated with~$n$ and denote it by~$\minword{n}$.
Additionally, we will use the term \emph{minimal outgoing label} of~$n$, 
  to designate
  the first letter of~$\minword{n}$ and \emph{minimal successor} 
  of~$n$ the unique
  successor of~$n$ by a minimal letter.
  
We define in a similar way the 
  \emph{maximal word}~$\maxword{n}$ associated with~$n$.

\section{The derived transducer}\lsection{der-tra}

The purpose of this section is to build an automaton 
over~${\Aq\x\Aq}$, that is, a \emph{letter-to-letter transducer}
  realising the function~$\minword{n}\mapsto\minword{(n+1)}$.
We call this transducer the \emph{derived transducer} and denote it by \Dpq*. 
It will be obtained from \Tpq* by a 
  local\footnote{The term \emph{local} is arguable see \rremark{locality}, in 
  the appendix.}
  transformation and this is the subject of \rsection{tpq->dpq}.

\subsection{From \Tpq* to \Dpq*}\lsection{tpq->dpq}
The transformation of \Tpq* into \Dpq* is a two-step process.
First, the structure of \Tpq* is changed locally, by changing the alphabet,
  and a new automaton \That* is thus obtained.
The second step consists in replacing the labels in \That* by a 
subset of~$\Aq\x\Aq$ by means of a \emph{substitution} (meaning that 
two transitions of \That* labelled by the same letter will be 
replaced by the same set of transitions) and produces \Dpq*.


\subsubsection{Changing the alphabet}
  We denote by \Bpq*  the alphabet~${\intint{p-(2q-1)}{(p-1)}}$.
%
In particular, if $p=(2q-1)$, $\Bpq = \Ap$; 
    if~${p<(2q-1)}$, \Bpq* contains negative digits;
    and if~${p>(2q-1)}$, \Bpq* is an uppermost subset of~$A_p$.
Note that \Bpq* is always of cardinal~$(2q-1)$, an \emph{odd number}, 
  that the digit~$(p-q)$ is then the \emph{centre} of~$\Bpq$ and 
  that its maximal element~$p-1$ coincides with the one of~$\Ap$. 
    
\medskip
The automaton~$\That$ is then defined by:
\begin{equation}
	\That=\aut{\N,\Bpq,\taupq,0,\N}
	\eqpnt
	\notag
\end{equation}
This is possible, even if~$\Bpq$ is larger than \Ap* because, 
  in \requation{tau},~$\taupq$ is defined 
  on~$\N\x \Z$, hence on~$\N\x\Bpq$.
  
\rfigure{l43->m43}, in the appendix, shows an example of the case when~$p$ 
  is (strictly) smaller than~$(2q-1)$, 
  i.e. one has to add edges (thicker arrows).
In this case, the resulting automaton is a DAG (more complex than a tree
  with one loop).
%
%
\rfigure{l73->m73} shows an example of the case when~$p$
  is (strictly) greater than~$(2q\nlb-\nlb1$), 
  i.e. one has to remove edges (dotted arrows).
In this case, the resulting automaton is a 
  forest (\ie an infinite union of trees).
%

\begin{figure}[ht!]
  \SmallPictures
  \begin{subfigure}{0.45\linewidth}
    \setlength{\vmord}{1.40cm}
    \setlength{\vmabs}{4cm}
    \centering
    \VCCall{p7q3_L-hat_leq16.tex} 
    \captionof{figure}{Transforming~$\Tc_\frac{7}{3}$ 
            into~$\widehat{\Tc_{\frac{7}{3}}}$} 
    \lfigure{l73->m73}
  \end{subfigure}\hfill
  \begin{subfigure}{0.45\linewidth}
    \setlength{\vmord}{1.40cm}
    \setlength{\vmabs}{4cm}
    \centering
    \VCCall{p7q3_D_leq16.tex}
    \caption{The derived transducer~$\Dc_{\frac{7}{3}}$}
    \lfigure{m73}
  \end{subfigure}
  \caption{From~$\Tc_\frac{7}{3}$ to $\Dc_{\frac{7}{3}}$}
\end{figure}

As already noted, if~$\msp{p=(2q-1)}$, $\Bpq = \Ap$ and \Tpq*=\That*.

\medskip

It is easy to verify that the process ensures that every state 
of~$\That$ congruent to~$-1$ modulo~$q$ has a unique successor and 
that all other states have exactly two successors. 
%

\subsubsection{Changing the labels}

Every label of \That* (which is a letter of~$\Bpq$) 
  is replaced by a \emph{set} of pairs of digits in~$\Aq\x\Aq$.
The \emph{label replacement 
  function}~${\omega_{\frac{p}{q}}\colon\Bpq\rightarrow \powerset{\Aq\x\Aq}}$
  (or~$\omega$ for short), 
  is more easily defined in two steps, as follows.
First, the function~$\widebar{\omega}$ computes the distance of the input to the 
  centre of~$\Bpq$: 
  $\msp\widebar{\omega}(a)=a-(p-q) \msp$, for every~$a$ 
  in~$\Bpq$.
Then, the image of~$a$ by~$\omega$ is the set of pairs of letters 
  in~$\Aq$ whose difference is~$\widebar{\omega}(a)$:
\begin{equation}
  \forall a \in \Bpq \quantsp 
    \omega(a) \e =\e \Defi{(\transpair{b}{c})\in\Aq\x\Aq}{c-b=\widebar{\omega}(a)}
	\eqpnt 
	\label{eq.omega}
\end{equation}

\begin{example}[The case~$\tds$]
	The functions~$\obtd$ and~$\otd$ are as follows:
	\begin{equation}
\begin{array}{rrcrc|crlcl}
\obtd\colon & 0 & \longmapsto & -1 & \e &
         \e & \otd\colon & 0 & \longmapsto & \set{~\transpair{1}{0}~} \\
\obtd\colon & 1 & \longmapsto &  0 & \e & 
         \e & \otd\colon & 1 & \longmapsto & \set{~\transpair{0}{0},~\transpair{1}{1}~} \\
\obtd\colon & 2 & \longmapsto &  1 & \e & 
         \e & \otd\colon & 2 & \longmapsto & \set{~\transpair{0}{1}~} 
   \end{array}
		\notag
	\end{equation}
and \figur{m32} shows~$\Dc_{\tds}$.
\end{example}

\begin{figure}[ht!]
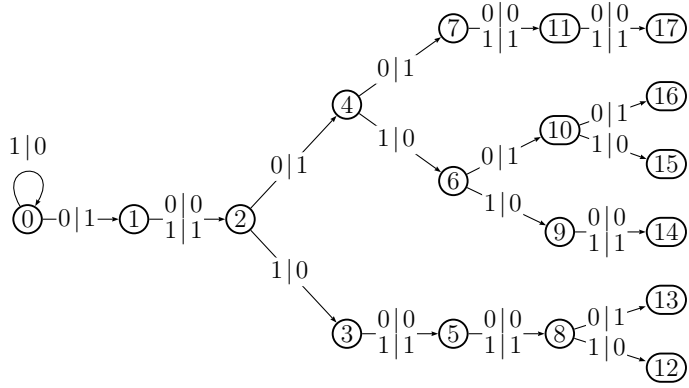

  \setlength{\vmord}{1cm}
  \setlength{\vmabs}{3.5cm}
\SmallPictures
\centering
  \VCCall{p3q2_D_leq40.tex} 
  \caption{The derived transducer~$\Dc_{\frac{3}{2}}$}
  \lfigure{m32}
\end{figure}
  
\rfigure{m73} shows the transducer~$\Dc_{\frac{7}{3}}$ and~$\Dc_{\frac{4}{3}}$ 
is represented by \rfigure{m43-app} in the appendix.





Formally, the transducer
  $\msp\Dpq=\aut{\N,\Aq\x\Aq,\delta,\eta,0,\N}\msp$
  is defined \emph{implicitly} or, more precisely, the transition 
  function~$\delta$ and the output function~$\eta$ are implicit 
  functions defined by the following statement:
\begin{multline}\label{eq.dpq}
  \forall n\in\N\quantvrg
    \forall a \in \Bpq\quantvrg
	  \forall (b,c) \in \omega(a) \quantsp\\
      \text{ $\taupq(n,a)$ is defined} \ee \Longrightarrow \ee n\pathx{\transpair{b}{c}}\taupq(n,a) \e
      \text{is a transition of~$\Dpq$,}\\
	  \text{that is, $\delta(n,b)=\taupq(n,a)$ and $\eta(n,b)=c$.}
\end{multline}

\medskip

In other words, the transitions of \Dpq* are labelled as follows:
if~${n\equiv-1~[q]}$, the state~$n$ has exactly one outgoing transition with 
    labels~${\transpair{0}{0},\,\transpair{1}{1},\,\ldots,\,\transpair{q-1}{q-1}}$.
Otherwise, the state~$n$ has two outgoings transitions and their labels 
    are~$\transpair{0}{k},\,{\transpair{1}{k+1}},\,\ldots,\,{\transpair{(q-k-1)}{q-1}}$ 
    for the upper transition 
    and~${\transpair{q-k}{0},\,{\transpair{{(q-k+1)}}{1}},\ldots,\,{\transpair{q-1}{k-1}}}$ 
    for the lower transition, 
    with~${k=a-(p-q)}$ and~$a$ being the maximal outgoing label of~$n$ in \Tpq*.

\medskip
    
The transducer constructed in this manner is 
  sequential and input-complete, as stated by the following lemma whose proof
  is given in the appendix.
  
\newcounter{lemma-seqic}
\setcounter{lemma-seqic}{\value{thm}}
\begin{lemma}
  For every state~$n$ of \Dpq* and every letter~$b$ of \Aq*, there exists
    a unique state~$m$ and a unique letter~$c$ 
    such that~${\msp n\pathx{\transpair{b}{c}}m}$.
\end{lemma}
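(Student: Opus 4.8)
The plan is to establish both required properties at once by showing that, for each fixed state $n$, the \emph{first components} of the labels of the edges leaving $n$ in $\mathcal{D}_{p/q}$ run through the alphabet $A_q$ exactly once. Indeed, once each letter $b\in A_q$ occurs as the first component of the label of one and only one outgoing edge, that edge determines the second component $c$ and the target $m$ without ambiguity, which is precisely the existence (input-completeness) and uniqueness (sequentiality) asserted by the lemma.

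First I would translate the label-replacement rule into a statement about admissible first components. Writing $d=\overline{\omega}(a)=a-(p-q)$, equation~\eqref{eq.omega} reads $\omega(a)=\{\transpair{b}{b+d}\mid b\in A_q,\ 0\le b+d\le q-1\}$; hence the set of first components carried by the edge that $\widehat{\mathcal{T}_{p/q}}$ labels by $a$ is the integer interval $I(a)=\{b\in A_q\mid -d\le b\le q-1-d\}$, and on that edge each such $b$ is paired with the single output $c=b+d$ and leads to the single state $\tau_{p/q}(n,a)$. It therefore suffices to prove that, as $a$ ranges over the outgoing letters of $n$ in $\widehat{\mathcal{T}_{p/q}}$, the intervals $I(a)$ partition $A_q$.

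Here I would split into the two cases recalled above. If $n\equiv-1\pmod q$ the unique outgoing letter is the centre $a=p-q$, so $d=0$ and $I(p-q)=A_q$: the labels are $\transpair{0}{0},\dots,\transpair{q-1}{q-1}$ and the partition is trivial. Otherwise $n$ has exactly two outgoing letters $a_1<a_2$, both satisfying $a\equiv -np\pmod q$ (the divisibility condition defining $\tau_{p/q}$). Being congruent modulo $q$ and both lying in the block $B_{p,q}$ of $2q-1$ consecutive integers, they differ by exactly $q$, i.e. $a_2=a_1+q$. Since $a_1\ge p-2q+1$, the larger label satisfies $a_2\ge p-q+1$, so $k=\overline{\omega}(a_2)=a_2-(p-q)$ lies in $\{1,\dots,q-1\}$, while $\overline{\omega}(a_1)=k-q$. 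Consequently $I(a_2)=\{0,\dots,q-1-k\}$ and $I(a_1)=\{q-k,\dots,q-1\}$, two disjoint intervals whose union is $A_q$.

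The only genuine content is this last partition step, and its crux is the arithmetic fact that the two outgoing labels differ by exactly $q$: this is what forces $\overline{\omega}(a_1)$ and $\overline{\omega}(a_2)$ to differ by $q$ and makes the intervals $I(a_1)$ and $I(a_2)$ abut without overlapping; the rest is bookkeeping with~\eqref{eq.omega}. Granting the partition, I fix $b\in A_q$: it lies in $I(a)$ for exactly one outgoing letter $a$, which pins down $c=b+\overline{\omega}(a)$ and $m=\tau_{p/q}(n,a)$, and no other edge out of $n$ can carry a pair with first component $b$. Writing this as $n\pathx{\transpair{b}{c}}m$ gives exactly the existence and uniqueness claimed.
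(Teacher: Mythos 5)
Your proof is correct and follows essentially the same route as the paper's: the same case split on whether $n\equiv-1\pmod q$, and in the two-successor case the same key arithmetic fact that the two outgoing labels differ by exactly~$q$, so that the admissible first components of the two edges complement each other in~$A_q$. The only (harmless) difference is presentational: the paper fixes a minimal letter~$d$ and shows $\omega(a)\cup\omega(a-q)$ contains exactly one pair with first component~$d$, whereas you state the equivalent global fact that the intervals $I(a_1)$, $I(a_2)$ partition~$A_q$ --- and you additionally spell out the congruence argument for $a_2=a_1+q$, which the paper leaves implicit.
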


\begin{corollary}
  For every infinite word~$w$ in \Aqo*,~$\Dpq(w)$ exists and is unique.
\end{corollary}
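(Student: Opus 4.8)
The plan is to obtain the corollary immediately from the lemma. Indeed, the lemma states that from every state~$n$ and for every input letter~$b\in\Aq$ there is \emph{exactly one} transition $n\pathx{\transpair{b}{c}}m$ of~$\Dpq$. In terms of the defining data of the transducer, this means precisely that the functions~$\delta$ and~$\eta$ are \emph{total} on $\N\x\Aq$: for every pair $(n,b)$ the values $\delta(n,b)=m$ and $\eta(n,b)=c$ exist and are uniquely determined, and $c$ is a single letter of~$\Aq$. Thus $\Dpq$ is a genuine sequential, input-complete, letter-to-letter transducer over the input alphabet~$\Aq$, and all that remains is to propagate this one-letter statement first to finite and then to infinite input words.

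First I would settle the finite case by induction on the length of the input. Recall that~$\delta$ is extended to $\N\x\Aqs$ and that~$\eta$ is extended by $\eta(n,\epsilon)=\epsilon$ and $\eta(n,u\xmd a)=\eta(n,u).\eta(n\cdot u,a)$. Using the totality of the one-step functions, an easy induction shows that for every state~$n$ and every $u\in\Aqs$ both $n\cdot u$ and $n\ast u$ exist and are unique, that is, there is a unique pair $(m,v)$ with $n\pathx{\transpair{u}{v}}m$; moreover $|v|=|u|$ since each step reads and writes exactly one letter. Taking $n=0$, this already gives that $\Dpq(u)=0\ast u$ exists and is unique for every finite word $u\in\Aqs$.

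It then remains to pass to the limit. Let $w\in\Aqo$ and let $(u_k)_{k\in\N}$ be the sequence of its finite prefixes, with $|u_k|=k$. Writing $b_k$ for the $k$-th letter of~$w$, the identity $\eta(0,u_{k+1})=\eta(0,u_k).\eta(0\cdot u_k,b_k)$ shows that $\Dpq(u_k)$ is a prefix of $\Dpq(u_{k+1})$ for every~$k$. Hence the words $\Dpq(u_k)$ form an increasing chain for the prefix order, with $|\Dpq(u_k)|=k$, and they therefore converge to a unique infinite word $w'\in\Aqo$: the word whose prefix of length~$k$ is $\Dpq(u_k)$ for every~$k$. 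By the definition of the image of an infinite word, $w'$ is exactly $\Dpq(w)$, which proves existence. Uniqueness is then immediate, for any $w''$ satisfying the defining condition must admit every $\Dpq(u_k)$ as a prefix, so $w''$ and $w'$ agree on all finite prefixes and hence coincide.

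I expect no genuine obstacle here, since the whole mathematical content has been absorbed into the lemma and what is left is a routine ``from letters to words'' argument. The only points requiring a little care are bookkeeping ones: checking that the letter-to-letter property yields $|\Dpq(u)|=|u|$ so that the chain of outputs really exhausts every position, and phrasing the limit step so that it matches the excerpt's definition of $\Dpq(w)$ through finite prefixes rather than through any separate extension of~$\eta$ to infinite inputs.
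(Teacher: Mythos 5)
Your proof is correct and follows exactly the route the paper intends: the corollary is stated there as an immediate consequence of the lemma, and your argument simply spells out the routine extension from one letter to finite words and then to infinite words via prefixes. Your care in noting that the letter-to-letter property gives $|\Dpq(u)|=|u|$, so the output prefixes exhaust every position and uniqueness holds under the paper's prefix-based definition of the image of an infinite word, is precisely the one point worth making explicit.
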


\subsection{Correctness of~$\Dpq$}\lsection{proof}

It remains to establish that \Dpq* has the expected behaviour,
  as stated in the following.
  
\newcounter{thmmpscor}%
\setcounter{thmmpscor}{\value{thm}}%
\begin{theorem}
	\ltheorem{mpq-correct}%
  For every~$n$ in~$\N$, $\msp\Mpq(\minword{n})=\minword{(n+1)}\msp$.
\end{theorem}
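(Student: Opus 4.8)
The plan is to rely on the fact that $\minword{(n+1)}$ is \emph{the} unique infinite word over the minimal alphabet $\Aq$ labelling a path of $\Tpq$ issued from the state $n+1$. Because the output alphabet of $\Dpq$ is $\Aq$, the image $v=\Dpq(\minword{n})$---which exists and is unique by the corollary above---is automatically an infinite word over $\Aq$; it therefore suffices to show that $v$ labels an infinite path of $\Tpq$ \emph{starting at the state $n+1$}. By the uniqueness just recalled, $v$ is then forced to equal $\minword{(n+1)}$, which is exactly the assertion.

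To do this I would run $\Dpq$ on the input $\minword{n}=b_0b_1b_2\cdots$ from the initial state $0$ and follow three runs of states in parallel: the states $m_0=n,m_1,m_2,\ldots$ visited along the minimal path of $n$ in $\Tpq$, so that each $b_i$ is the minimal outgoing label of $m_i$ and $m_{i+1}=\taupq(m_i,b_i)$; the states $s_0=0,s_1,s_2,\ldots$ successively reached by $\Dpq$; and the states $t_0=n+1,t_1,t_2,\ldots$ that the output $c_0c_1\cdots$ ought to visit, set provisionally by $t_{i+1}=\taupq(t_i,c_i)$. Writing $a_i$ for the letter of $\Bpq$ labelling the transition of $\That$ underlying the $i$-th step of $\Dpq$, one has $s_{i+1}=\taupq(s_i,a_i)$ and, by the definition of $\omega$ given in \requation{omega}, $c_i-b_i=a_i-(p-q)$. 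The crux of the argument is the invariant
\[
  t_i \;=\; m_i+s_i+1,
\]
which I would prove by induction on $i$, simultaneously with the claim that $c_i$ is the minimal outgoing label of $t_i$ --- this last point being what guarantees that $t_{i+1}$ is well defined and that $c_0c_1\cdots$ is indeed the minimal path of $n+1$.

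The base case holds since $t_0=n+1=m_0+s_0+1$. For the inductive step, the divisibility requirement built into $\Dpq$ (a transition from $s_i$ exists only when $\taupq(s_i,a_i)$ is defined, \ie~when $s_ip+a_i\equiv0\pmod q$), together with $a_i=c_i-b_i+(p-q)$ and $b_i\equiv-m_ip\pmod q$, gives $c_i\equiv-(m_i+s_i+1)p\equiv-t_ip\pmod q$; since $c_i\in\Aq=\intint{0}{q-1}$ and $\gcd(p,q)=1$, this congruence forces $c_i$ to be precisely the minimal outgoing label of $t_i$, and $t_{i+1}=\taupq(t_i,c_i)$ is well defined. Propagating the invariant is then immediate: from $b_i+a_i+q=c_i+p$ one computes $m_{i+1}+s_{i+1}+1=\frac{(m_i+s_i+1)p+c_i}{q}=\frac{t_ip+c_i}{q}=t_{i+1}$. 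Consequently $c_0c_1c_2\cdots$ labels, from $n+1$, the path of $\Tpq$ built entirely of minimal outgoing labels, which is $\minword{(n+1)}$. The genuinely non-routine part is discovering the invariant $t_i=m_i+s_i+1$, which welds together three a priori unrelated runs; once it is in hand, the congruence pinning down $c_i$ and the arithmetic closing the induction are purely mechanical, and the coprimality of $p$ and $q$ enters exactly to make the minimal outgoing label unique.
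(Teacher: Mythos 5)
Your proposal is correct and takes essentially the same approach as the paper: your invariant $t_i = m_i + s_i + 1$ is precisely the paper's key proposition (that $n\pathx{u}m$ in $\Tpq$ and $i\pathx{\transpair{u}{v}}j$ in $\Dpq$ imply ${(n+i+1)\pathx{v}(m+j+1)}$ in $\Tpq$), likewise established by a single-letter computation followed by induction on the length of the input, with uniqueness of the infinite $\Aq$-word issued from $n+1$ closing the argument. Your congruence pinning down $c_i$ is \propo{dpq-caract} in disguise (the paper packages it as the closed formula $c=(b-(n+1)\xmd p)\mod q$); the one cosmetic slip is the final remark, since uniqueness of the minimal outgoing label needs only that $\intint{0}{q-1}$ contains exactly one representative of each residue class modulo~$q$, not the coprimality of $p$ and $q$.
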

  
The proof of this theorem relies on the equivalent (and more 
  explicit) definition of the transition of \Dpq*, stated in the following
  proposition whose proof is given in the appendix.
  
\newcounter{propo-dpq-caract}%
\setcounter{propo-dpq-caract}{\value{thm}}%
\begin{proposition}
  \lproposition{dpq-caract}%
  If $\msp n \pathx{\transpair{b}{c}} m \msp$
  is a transition of~$\Dpq$,
  then
  \begin{equation}
    c = (b-(n+1)\xmd p)\mod{q}
  \ee\text{and}\ee
  m=\bceil{\frac{(n+1) \xmd p-b}{q}-1}
  \eqpnt
  \notag
  \end{equation}
\end{proposition}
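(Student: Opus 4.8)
The plan is to unwind the definition of the transitions of \Dpq* given in \eqref{eq.dpq} and to reduce the whole statement to a direct computation with the division function~$\taupq$. By that definition, saying that $n \pathx{\transpair{b}{c}} m$ is a transition of~$\Dpq$ means exactly that there is a digit~$a$ in~$\Bpq$ with $(\transpair{b}{c})\in\omega(a)$ for which $\taupq(n,a)$ is defined and $m=\taupq(n,a)$. Since~$a$ is recovered uniquely from the pair~$(b,c)$, no case analysis on the outgoing labels of~$\Tpq$ will be needed, which keeps the argument purely computational.

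First I would make~$a$ explicit. By \eqref{eq.omega} together with the definition of~$\widebar{\omega}$, the membership $(\transpair{b}{c})\in\omega(a)$ is equivalent to $c-b=\widebar{\omega}(a)=a-(p-q)$, hence $a=c-b+(p-q)$. Substituting this into $m=\taupq(n,a)=(np+a)/q$ and regrouping the~$p$'s yields
\begin{equation}
  m = \frac{(n+1)\xmd p - b + c}{q} - 1
  \eqpnt
  \notag
\end{equation}
which already exhibits the numerator $(n+1)\xmd p-b$ of the statement; it remains only to identify~$c$ and to absorb the extra~$+c$ into a ceiling.

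Next I would exploit the divisibility condition built into~$\taupq$ in \eqref{eq.tau}: the value $\taupq(n,a)$ exists only when~$q$ divides $np+a$, that is --- after the substitution --- only when~$q$ divides $(n+1)\xmd p-b+c$. Setting $X=(n+1)\xmd p - b$, this says $X+c\equiv 0 \pmod{q}$, so $c\equiv -X\equiv b-(n+1)\xmd p \pmod{q}$; as~$c$ is a letter of $\Aq=\intint{0}{q-1}$, it is the unique residue in that range, giving $c=(b-(n+1)\xmd p)\mod{q}$, the first formula.

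Finally, for the second formula I would put $t=(X+c)/q$, an integer by the previous step, so that $m=t-1$. It then suffices to prove $\ceil{X/q}=t$, because subtracting the integer~$1$ commutes with the ceiling and thus gives $m=\ceil{X/q}-1=\bceil{X/q-1}=\bceil{\frac{(n+1)\xmd p-b}{q}-1}$. Writing $X=qt-c$ with $0\leq c<q$, we get $t-1<X/q\leq t$, whence $\ceil{X/q}=t$ by the convention fixed in \rsection{words}. The single point that deserves care is precisely this ceiling identity: one must check that the boundary case $c=0$ does not shift the value by one, and the two-sided inequality $t-1<X/q\leq t$ settles $c=0$ and $0<c<q$ at once. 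Combining the two parts completes the proof.
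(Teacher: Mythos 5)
Your proof is correct and follows essentially the same route as the paper's: recover $a=(p-q)+(c-b)$ from the membership $(\transpair{b}{c})\in\omega(a)$, derive the first formula from the divisibility of $n\xmd p+a$ by~$q$, and obtain the second by computing $m=\taupq(n,a)$ and absorbing the term~$c/q$ into the ceiling via $0\leq c/q<1$. Your explicit verification $t-1<X/q\leq t$ is in fact a more careful rendering of the paper's final step (whose displayed simplification carries a sign slip, $-\frac{c}{q}$ where it should read $+\frac{c}{q}$, though it relies on the same bound), so nothing is missing.
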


In the case of finite words, 
  a stronger version can be stated.  
  
\begin{theorem}\ltheorem{mpq-cc}
  Given a base \pq* and two finite words~$u,v$ over \Aq*, 
    $\transpair{u}{v}$ labels a 
    run of \Mpq* 
    if, and only if there exists an integer~$n$ such that~$u$ is a prefix 
    of~$\minword{n}$ and~$v$ is a prefix of~$\minword{n+1}$.
\end{theorem}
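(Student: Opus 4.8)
The plan is to read a run of $\Mpq$ as one that starts at the initial state~$0$; with this convention $\transpair uv$ labels a run exactly when $\Mpq(u)=v$, and since $\Mpq$ is letter-to-letter this forces $\wlength u=\wlength v$, say $=\ell$. (Reading ``run'' as starting from an arbitrary state would make the statement false: in the transducer of \figur{m32} the pair $\transpair{11}{01}$ labels a run from state~$2$, yet no~$n$ has $\minword n$ and $\minword{n+1}$ beginning with $11$ and~$01$ respectively.) With this reading the $(\Leftarrow)$ direction is a one-line consequence of \theor{mpq-correct}: if $u$ is a prefix of~$\minword n$ and $v$ a prefix of~$\minword{n+1}$ of the same length~$\ell$, then $u$ is the length-$\ell$ prefix of~$\minword n$, and since $\Mpq$ is sequential the image of a prefix is the matching prefix of the image, so $\Mpq(u)$ is the length-$\ell$ prefix of $\Mpq(\minword n)=\minword{n+1}$, namely~$v$.

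For $(\Rightarrow)$ I would isolate the combinatorial core as the claim~$(\ast)$: \emph{every finite word over~$\Aq$ is a prefix of some minimal word}. Granting $(\ast)$, the implication is immediate: from a run $0\pathx{\transpair uv}k$ we have $\Mpq(u)=v$; choosing by~$(\ast)$ an integer~$n$ with $u$ a prefix of~$\minword n$, the words $u$ and $\minword n$ agree on their first~$\ell$ letters, so by sequentiality and \theor{mpq-correct} the word $v=\Mpq(u)$ is the length-$\ell$ prefix of~$\minword{n+1}$, hence a prefix of~$\minword{n+1}$. Thus the entire weight of the theorem rests on~$(\ast)$, which is the step I expect to be the real obstacle.

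To prove~$(\ast)$ I would argue by a cardinality count built on a single value computation. Fix~$\ell$ and, for an integer~$m$, write $m=M_0,M_1,\dots,M_\ell$ for the states reached by following minimal successors $\ell$ times from~$m$ in~\Tpq*, so that $M_{i+1}\,q=M_i\,p+b_i$ where $b_0\cdots b_{\ell-1}$ is the length-$\ell$ prefix~$u^{(m)}$ of~$\minword m$. Expanding $q^{\ell}\,\val{u^{(m)}}=\sum_{i=0}^{\ell-1} b_i\,p^{\ell-1-i}q^{i}=\sum_{i=0}^{\ell-1}(M_{i+1}q-M_i p)\,p^{\ell-1-i}q^{i}$ and telescoping the two resulting sums collapses everything to $M_\ell q^{\ell}-m\,p^{\ell}$, which gives the closed form
\begin{equation}
  \val{u^{(m)}} = M_\ell - m\left(\pq\right)^{\ell}
  \eqpnt
  \notag
\end{equation}
Now if two integers $m,m'$ produce the same prefix $u^{(m)}=u^{(m')}$, subtracting the two instances of this identity yields $(M_\ell-M'_\ell)\,q^{\ell}=(m-m')\,p^{\ell}$; since $p$ and~$q$ are coprime, $q^{\ell}$ divides $m-m'$. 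Hence $m\mapsto u^{(m)}$ is injective on $\intint{0}{q^{\ell}-1}$, and as this set and the set of length-$\ell$ words over~$\Aq$ both have cardinality~$q^{\ell}$, the map is a bijection; in particular it is onto, which is precisely~$(\ast)$. The only delicate points I foresee are bookkeeping ones: checking that the telescoping indeed collapses cleanly, and verifying that sequentiality of~$\Mpq$ (the Corollary to the sequentiality lemma) legitimately transports the infinite statement of~\theor{mpq-correct} to finite prefixes.
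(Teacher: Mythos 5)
Your proof is correct, but it follows a genuinely different route from the paper --- indeed, the paper states this theorem in the body and never spells out a proof of it, in the appendix or elsewhere. The intended derivation is clearly meant to go through \propo{mpq-correct}, which describes the behaviour of \Dpq* from \emph{all} states: if $n\pathx{u}m$ in \Tpq* and $0\pathx{\transpair{u}{v}}j$ in \Dpq*, then $(n+1)\pathx{v}(m+j+1)$ in \Tpq*; since a word over \Aq* labels a path of \Tpq* from~$n$ exactly when it is a prefix of~$\minword{n}$ (every state has a unique minimal successor), this yields both implications, with determinism and input-completeness supplied by the lemma you cite. However, to apply that proposition in the direction going from a run of \Dpq* to an integer~$n$, one must first know that an arbitrary input word~$u$ over \Aq* labels a path of \Tpq* from \emph{some} state --- which is exactly your claim~$(\ast)$, and the paper neither states nor proves it. So your cardinality argument is not a detour but precisely the combinatorial core the paper leaves implicit. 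Your telescoped identity checks out (it is the standard relation $m = n\left(\pqs\right)^{\ell}+\val{u}$ whenever $n\pathx{u}m$ in \Tpq*), and the coprimality step correctly shows the length-$\ell$ prefix map is injective on~$\intint{0}{q^{\ell}-1}$, hence bijective onto the length-$\ell$ words over \Aq* --- a quantitatively sharper statement than~$(\ast)$ itself, identifying prefixes of minimal words with residues modulo~$q^{\ell}$. Two further points in your favour: your insistence that ``run'' must start in the initial state matches the paper's own definition of the image of a word (via $0\pathx{\transpair{u}{v}}k$), and your counterexample $\transpair{11}{01}$ from state~$2$ of~$\Dc_{\tds}$ is valid (such an~$n$ would need $n\equiv 3~[4]$, whence $\minword{n+1}$ begins with~$00$); likewise, making the tacit equal-length requirement explicit (forced by the letter-to-letter property) is the right reading of the statement.
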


This theorem is purposely stated on finite words, as a similar statement
  for infinite words would be false: 
  for every infinite word~$w$ of \Aqo*,~$\Dpq(w)$ exists, hence
  there is uncountably many pairs of infinite words~${\transpair{w}{\Dpq(w)}}$ accepted by \Mpq*
  while there is only countably many
  pairs~$\transpair{\minword{n}}{\minword{n+1}}$.
  
\section{Span of a node}\lsection{spa-nod}

In this part, 
  we consider the real value of infinite words.
We denote by~${\realval{}:A_p^{\xmd\omega}\rightarrow\R}$, the 
  \emph{real evaluation function}, 
  defined as follows:
\begin{equation}
  \realval{a_1a_2\cdots a_n\cdots} \e = \e 
    \sum_{i\geq0} a_i \xmd\left(\pq\right)^{-i} \eqpnt
\end{equation}

We denote by \Wpq* the language of infinite words~$\Lc(\Tpq)$.
It is proven in~\mbox{\cite[\nlb Theorem~2]{AkiyEtAl08}} that~$\realvalS{\Wpq}$ is the 
  interval~$[0,~\realval{\maxword{0}}]$.
By extension, we denote by \Wpqn* (or, for short, \Wn*) the language
  of infinite words~$\cod{n}^{-1}\Lc(\Tpq)$. 
Intuitively, an infinite word~$w$ over \Ap* is in \Wn* 
  if $\msp n\cdot u \msp$ exists in \Tpq* for every finite prefix~$u$ of~$w$.
Analogously to \Wpq*, the following holds.
\begin{lemma}
  For every integer~$n$, $\realvalS{\Wpqn}$ is the 
    interval~$[\realval{\minword{n}},~\realval{\maxword{n}}]$.
\end{lemma}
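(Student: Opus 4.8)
The plan is to reduce the statement to two complementary facts — that no branch issued from $n$ escapes the announced interval, and that no value inside it is missed — and to pivot the whole argument on the adjacency identity hinted at in the introduction. First I would decompose $\Wpqn$ according to the first letter. Write $c_1<\dots<c_r$ for the labels of the outgoing edges of $n$ in \Tpq* and $n_1<\dots<n_r$ for the corresponding successors; since two labels of a same node differ by a multiple of $q$, consecutive ones satisfy $c_{s+1}=c_s+q$ and $n_{s+1}=n_s+1$, the pair $(c_1,n_1)$ giving the minimal outgoing label and the minimal successor and $(c_r,n_r)$ the maximal ones. Using the elementary shift relation $\realval{a\,w}=a+\qp\,\realval{w}$, the decomposition reads
\begin{equation}
  \realvalS{\Wpqn}=\bigcup_{s=1}^{r}\Bigl(c_s+\qp\,\realvalS{W_{\frac{p}{q},n_s}}\Bigr)
  \notag
\end{equation}
with the boundary equalities $\realval{\minword{n}}=c_1+\qp\realval{\minword{n_1}}$ and $\realval{\maxword{n}}=c_r+\qp\realval{\maxword{n_r}}$.

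The crucial step is the \emph{adjacency identity}: for every $m$, $\realval{\maxword{m}}=\realval{\minword{(m+1)}}+p$. I would prove it from two combinatorial facts, obtained by reducing $-mp$ and $-mp-p$ modulo $q$: the maximal successor of $m$ is exactly one less than the minimal successor of $m+1$, and the maximal label of $m$ exceeds the minimal label of $m+1$ by exactly $p-q$. Setting $\phi(m)=\realval{\maxword{m}}-\realval{\minword{(m+1)}}$ and following the maximal spine, these facts give $\phi(m)=(p-q)+\qp\,\phi(m')$, where $m'$ is the maximal successor of $m$; since $\phi$ is bounded (the range of $\realval{}$ on $A_p^{\omega}$ is bounded), unfolding yields $\phi(m)=(p-q)\sum_{i\geq0}(\qp)^i=p$. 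Carried into the decomposition, this identity makes the right endpoint $c_s+\qp\realval{\maxword{n_s}}$ of the $s$-th piece coincide with the left endpoint $c_{s+1}+\qp\realval{\minword{(n_s+1)}}$ of the next one; together with the two boundary equalities, the pieces therefore tile $[\realval{\minword{n}},\realval{\maxword{n}}]$ with no gap.

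It then remains to establish the two inclusions. For surjectivity I would use a nested-interval descent: given a target $x$ in $[\realval{\minword{n}},\realval{\maxword{n}}]$, the tiling lets me select a child whose shifted subinterval contains $x$, write $x=c_s+\qp x'$, and recurse on $x'$; the branch $w$ so produced satisfies $\realval{w}=x$ because the successive remainders, of the form $(\qp)^d x^{(d)}$, vanish. For the reverse inclusion — that every branch from $n$ evaluates inside $[\realval{\minword{n}},\realval{\maxword{n}}]$, which also guarantees that the interval is correctly oriented — I would invoke that $\realval{}$ is nondecreasing along the left-to-right (lexicographic) order of the branches of \Tpq*, so that the extremal branches $\minword{n}$ and $\maxword{n}$ realise the extremal values. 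This is the monotonicity already underlying the computation of $\realvalS{\Wpq}$ in \cite{AkiyEtAl08}, here applied to the branches issued from an arbitrary node instead of from the root.

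The main obstacle is precisely this monotonicity step. A naive digit-by-digit comparison is doomed: when two branches first differ, the lead given by the larger digit has size at least $q$, but the two geometric tails can differ in the adverse direction by as much as $\qp\,p=q$, so a crude tail bound never closes the gap. What saves the argument is exactly the adjacency identity, which shows that for two \emph{consecutive} siblings the maximal tail handicap $\qp\,p$ equals the digit gap $q$ on the nose; the comparison is therefore tight rather than strict. The correct proof thus locates the first point of difference and propagates this tight estimate along it, rather than estimating tails independently. This is the technical heart of the statement, and it is the natural extension to an arbitrary subtree of the argument establishing $\realvalS{\Wpq}=[0,\realval{\maxword{0}}]$.
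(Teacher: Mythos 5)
The paper never writes a proof of this lemma: it is stated after the sentence ``Analogously to $\Wpq$, the following holds'', i.e.\ it is justified entirely by relativising \cite[Theorem~2]{AkiyEtAl08} (the case of the root, $\realvalS{\Wpq}=[0,\realval{\maxword{0}}]$) to an arbitrary node. Your proposal is precisely that relativisation carried out in detail, so in substance you follow the intended route, and the new ingredients you supply check out. The sibling arithmetic is correct ($c_{s+1}=c_s+q$, hence $n_{s+1}=n_s+1$); the two facts behind the adjacency identity are right, since $a^{\max}(m)-a^{\min}(m+1)\equiv p \pmod q$ and lies in $\intint{p-2q+1}{p-1}$, forcing the value $p-q$; and your fixpoint derivation of $\phi\equiv p$ is genuinely non-circular, because $\phi$ is a difference of values of two \emph{specific} words (hence bounded) and $\phi(m)=(p-q)+\frac{q}{p}\,\phi(m')$ is an exact identity. (Your constant $p$ reflects your normalisation $\realval{a\,w}=a+\frac{q}{p}\realval{w}$, which gives the leading digit weight $1$; with the paper's $\rho$ the constant is $q$. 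The lemma is invariant under this rescaling, so no harm done. The identity itself is the paper's appeal to \cite[Corollary~38]{AkiyEtAl08} in another guise: the two representations of a boundary real are $\cod{i}\maxword{i}$ and $\cod{i+1}\minword{i+1}$.) The nested-interval surjection is also sound, given orientation of the subintervals.

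The one soft spot is the monotonicity step, and your diagnosis of it, while honest, would stall if executed literally. Your ``tight'' comparison at the first point of difference bounds the adverse tail gap by $\frac{q}{p}\cdot p=q$ via $\realval{w'}\leq\realval{\maxword{m_a}}$ and $\realval{v'}\geq\realval{\minword{m_b}}$ for the two sibling subtrees --- but these bounds are exactly the containment half of the lemma at the children, and in an infinite tree this propagation has no base case; unlike your $\phi$-argument, it does not compare two specific words. The repair stays close to your plan but propagates \emph{node indices} rather than value estimates: if $n\pathx{u}m$ with $\wlength{u}=k$, then $\realval{u}=p\left(\left(\frac{q}{p}\right)^{k}m-n\right)$, so prefix values are monotone in the endpoint; and once two branches from $n$ separate, the node of the lexicographically larger one stays strictly above that of the smaller at every later depth, because the children-intervals of consecutive states $m$ and $m+1$ are adjacent (the same mod-$q$ computation as your adjacency facts, using that the maximal label of a node always dominates its minimal one). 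The value difference is then at least $p\lim_k\left(\frac{q}{p}\right)^{k}= 0$, with no appeal to the children's extremal words. Since orientation of the interval --- nonnegativity of the span, which for $p<2q-1$ does \emph{not} follow from digit ranges --- is also routed through this monotonicity, patching this single step completes your proof; everything else is correct as written, and is in any case more than the paper itself provides.
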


\begin{definition}
  For every integer~$n$, the span of~$n$, denoted by~$\nspan(n)$,
    is the size of~$\realvalS{\Wn}$:~$\nspan(n)=(\realval{\maxword{n}}-\realval{\minword{n}})$.
\end{definition}

Let~$a$ be a letter from the minimal 
  alphabet~${A_q=\intint{0}{(q-1)}}$ 
  and~$b$ a letter from the maximal alphabet~${\intint{(p-q)}{(p-1)}}$.
The integer~$(b-a)$ is necessarily in~$\intint{p-(2q-1)}{~p-1}=\Bpq$.
Hence, through this digit-wise subtraction, denoted 
  as~`$\dgwm$',~$(\maxword{n}\dgwm\minword{n})$ 
  is a word over~\Bpq*, and is called \emph{the span-word} of~$n$.
It is routine to check that the following statement is true.

\begin{lemma}
  For all integer~$n$, 
    $\nspan(n)=\realval{\maxword{n}\dgwm\minword{n}}$.
\end{lemma}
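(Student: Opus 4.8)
The plan is to reduce the statement to the \emph{linearity} of the real-evaluation series, so that the digit-wise subtraction $\dgwm$ translates directly into the arithmetic subtraction of the two evaluations.

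First I would fix $n$ and write the two words letter by letter, say $\maxword{n} = b_0 b_1 b_2 \cdots$ and $\minword{n} = a_0 a_1 a_2 \cdots$, indexed so as to match the summation $\sum_{i\geq0}$ appearing in the definition of $\realval{}$. By the very definition of the digit-wise subtraction $\dgwm$, the $i$-th letter of the span-word $\maxword{n}\dgwm\minword{n}$ is exactly $b_i-a_i$; and, as already observed before the statement, each such letter lies in $\Bpq$, so the span-word is a genuine infinite word over $\Bpq$ to which the extended evaluation applies.

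Second I would extend $\realval{}$ to words over $\Bpq$ --- which may contain negative digits --- by the same formula $\sum_{i\geq0} c_i \left(\pq\right)^{-i}$. This is legitimate because the series converges \emph{absolutely}: the letters $c_i$ range over the finite alphabet $\Bpq$ and are therefore bounded, while $\sum_{i\geq0}\left(\pq\right)^{-i}$ is a convergent geometric series since $p>q$ forces $\pq>1$. The same remark applies to the series defining $\realval{\maxword{n}}$ and $\realval{\minword{n}}$. Absolute convergence of all three series is what then licenses a term-by-term split:
\[
  \realval{\maxword{n}\dgwm\minword{n}}
    = \sum_{i\geq0} (b_i-a_i)\left(\pq\right)^{-i}
    = \sum_{i\geq0} b_i \left(\pq\right)^{-i} - \sum_{i\geq0} a_i \left(\pq\right)^{-i}
    = \realval{\maxword{n}} - \realval{\minword{n}}.
\]
By the definition of the span, the right-hand side is precisely $\nspan(n)$, which finishes the argument.

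There is essentially no obstacle here; the only point deserving care is the justification for splitting the series, which is exactly why I would make the absolute-convergence observation explicit rather than leave it implicit. Everything else is immediate from the definitions of $\realval{}$, of $\dgwm$, and of $\nspan$, consistent with the fact that the statement is flagged as routine.
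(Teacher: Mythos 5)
Your proposal is correct and coincides with the argument the paper has in mind: the paper gives no written proof, dismissing the lemma as ``routine to check,'' and the routine verification is exactly your term-by-term splitting of the evaluation series, justified by absolute convergence of $\sum_{i\geq0} c_i \left(\frac{p}{q}\right)^{-i}$ for bounded digits $c_i$ (since $\frac{p}{q}>1$). Your explicit remarks that the span-word is a genuine word over $\Bpq$ and that $\realval{}$ extends to negative digits are precisely the points the paper leaves implicit.
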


We denote by \Spq* the set of real numbers~$\set{\nspan(n)~|~n\in\N}$.
In order to establish properties of \Spq* (\rtheorem{spa-top}, below) we first
  need to consider span-words.
  
\begin{theorem}\ltheorem{that-complete}
  All span-words are accepted by \That*.
\end{theorem}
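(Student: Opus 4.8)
The plan is to prove acceptance directly: for an arbitrary node $m$ I will exhibit an explicit infinite path of $\That$ issued from the initial state $0$ whose label is exactly the span-word of $m$. To this end I run the minimal and maximal words of $m$ side by side. Writing $\minword{m}=\alpha_1\alpha_2\cdots$ and $\maxword{m}=\beta_1\beta_2\cdots$, I set $n_0=M_0=m$ and $n_j=\taupq(n_{j-1},\alpha_j)$, $M_j=\taupq(M_{j-1},\beta_j)$ for the nodes successively visited by the two paths; the letters $\alpha_j\in\Aq$ and $\beta_j\in\intint{p-q}{p-1}$ are unique at each node, since every state of $\Tpq$ has a unique minimal and a unique maximal successor. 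The span-word is then $d_1d_2\cdots$ with $d_j=\beta_j-\alpha_j\in\Bpq$.

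The heart of the argument is to follow the differences $t_j:=M_j-n_j$ and to show they form the run I want. Because $\taupq$ is affine in both of its arguments and $qM_j=M_{j-1}p+\beta_j$, $qn_j=n_{j-1}p+\alpha_j$ are integers, a one-line computation gives
\[
  \taupq(t_{j-1},d_j)=\frac{t_{j-1}\,p+d_j}{q}
    =\frac{(M_{j-1}p+\beta_j)-(n_{j-1}p+\alpha_j)}{q}=M_j-n_j=t_j ,
\]
and in particular $t_{j-1}p+d_j=q\,t_j$ is divisible by $q$, so each transition $t_{j-1}\pathx{d_j}t_j$ is defined in $\That$. Since $t_0=m-m=0$, the span-word labels the candidate path $0=t_0\pathx{d_1}t_1\pathx{d_2}t_2\cdots$.

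What remains --- and this is the only genuine difficulty --- is to check that the path never leaves the state set $\N$, i.e. that $t_j\ge 0$ for all $j$. This is automatic when $p\ge 2q-1$, since then $\Bpq$ has no negative digit, but it must be argued when $p<2q-1$, where a single step could a priori send a state below $0$. I proceed by induction on $j$, starting from $t_0=0$. If $t_{j-1}\ge 1$, then even the least digit $p-2q+1$ of $\Bpq$ yields $t_j=(t_{j-1}p+d_j)/q\ge (2p-2q+1)/q>0$. The delicate case is $t_{j-1}=0$, which means the minimal and maximal paths have reconverged to a common node $N=n_{j-1}=M_{j-1}$; there $\alpha_j$ and $\beta_j$ are both congruent to $-Np$ modulo $q$, so $d_j$ is a multiple of $q$ lying in $\Bpq=\intint{p-2q+1}{p-1}$. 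The crucial observation is that $-q<p-2q+1$, hence $-q\notin\Bpq$ and the only multiples of $q$ in $\Bpq$ are non-negative; thus $d_j\ge 0$ and $t_j=d_j/q\ge 0$. This closes the induction, every transition is legitimate, and the span-word is accepted by $\That$.

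The expected obstacle is precisely this last positivity statement at the reconvergence points $t_{j-1}=0$: it is the single place where the negative digits of $\Bpq$ (present exactly when $p<2q-1$) have to be controlled, and it rests on the arithmetic fact that $\Bpq$ contains no negative multiple of $q$. Everything else reduces to the linearity of $\taupq$.
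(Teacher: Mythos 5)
Your proof is correct, and it takes a genuinely different route from the paper's. The paper deduces \theor{that-complete} from the machinery of \secti{der-tra}: the pair~$\transpair{\minword{n}}{\minword{n+1}}$ is accepted by \Dpq* (input-completeness lemma together with \theor{mpq-correct}), and \propo{dpq->spq} --- whose proof is the one-line verification that~$m(c)-b=a$ for every~$\transpair{b}{c}$ in~$\omega(a)$ --- shows that the path of \That* underlying any accepting run of \Dpq* is labelled by~$m(w')\dgwm w$; since~$m(\minword{n+1})=\maxword{n}$, that label is precisely the span-word. You bypass \Dpq* entirely and build the accepting run of \That* by hand, tracking the difference~$t_j=M_j-n_j$ between the maximal and minimal paths issued from the node: the identity~$q\,t_j=t_{j-1}\,p+d_j$ gives at once the divisibility condition and the successor state, and your induction settles the one point a direct construction must face, namely that the~$t_j$ stay in~$\N$ when \Bpq* contains negative digits. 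In fact your difference states coincide exactly with the states visited by the run of~$\transpair{\minword{n}}{\minword{n+1}}$ in \Dpq* (one checks~$m(c)=c+p-q$, so the minimal path from~$n+1$ runs one unit above the maximal path from~$n$, and \propo{mpq-correct} then identifies the two sequences), and your reconvergence analysis at~$t_{j-1}=0$ --- resting on the fact that \Bpq* contains no negative multiple of~$q$, because~$-q<p-2q+1$ --- is essentially the arithmetic the paper hides inside the proof of the input-completeness lemma. What your version buys is self-containedness: only the definitions of~$\Tpq$, $\That$ and of minimal and maximal words are used, and the delicate non-negativity issue for~$p<2q-1$ is made explicit instead of being absorbed into the construction of \Dpq*. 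What the paper's route buys is the structural statement itself: identifying span-words with projections of runs of \Dpq* is what lets the subsequent facts --- every finite word of \That* is a prefix of a span-word, and~$\infbhv{\That}$ is the topological closure of the set of span-words (\corol{topo-closure}) --- fall out of the same circle of ideas via \theor{mpq-cc}, a connection your ad hoc run does not by itself provide.
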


The proof of this theorem is a direct consequence 
  of~\rproposition{dpq->spq} below and requires more definitions.
The span-words is closely related to the derived transducer. 
There exists a (trivial) map~$m$
  from the minimal alphabet to the maximal alphabet, such that, 
  for all integer~$n$,~${m(\minword{n+1})=\maxword{n}}$.
\begin{align}
  m:\e \Aq\e  \longrightarrow \e & \intint{(p-q)}{(p-1)} \notag\\
     a \e  \longmapsto    \e & \maxletter(a+p)
\end{align}
where~$\maxletter(x)$ is the greatest integer congruent to~$x$ modulo~$q$ and 
  strictly smaller than~$p$.
By extending~$m$ to \Aqo*, \rtheorem{that-complete} is reduced to say that
  \That* accepts~$(m(\minword{n+1})\dgwm\minword{n})$ for every~$n$:

\newcounter{propo-dpq->spq}
\setcounter{propo-dpq->spq}{\value{thm}}
\begin{proposition}\lproposition{dpq->spq}
  If~$\transpair{w}{w'}$ is a pair of infinite words accepted by \Dpq*
  then \That* accepts the word~$(m(w')\dgwm w)$.
\end{proposition}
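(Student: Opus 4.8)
The plan is to unwind the accepting run of $\transpair{w}{w'}$ in $\Dpq$ and to read it off, letter by letter, as a run of $\That$ whose label is exactly $(m(w')\dgwm w)$. Writing $w=b_1b_2\cdots$ and $w'=c_1c_2\cdots$, the accepting run is an infinite path
\[
  0\pathx{\transpair{b_1}{c_1}}n_1\pathx{\transpair{b_2}{c_2}}n_2\cdots
\]
in $\Dpq$ starting at the initial state. Since the transition and output functions of $\Dpq$ are exactly those produced by \requation{dpq}, each step $n_{i-1}\pathx{\transpair{b_i}{c_i}}n_i$ comes from a letter $a_i$ of $\Bpq$ such that $\taupq(n_{i-1},a_i)$ is defined and equals $n_i$, and such that $(\transpair{b_i}{c_i})\in\omega(a_i)$. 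By \requation{omega} this last condition reads $c_i-b_i=\widebar{\omega}(a_i)=a_i-(p-q)$, so that $a_i$ is forced to be $(p-q)+(c_i-b_i)$.

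First I would check that this letter $a_i$ is precisely the $i$-th letter of the span-word $(m(w')\dgwm w)$. For any $c$ in $\Aq$ the definition of $\maxletter$ gives $m(c)=c+(p-q)$, the unique element of the maximal alphabet congruent to $c$ modulo $q$. Hence the $i$-th letter of $(m(w')\dgwm w)$ is
\[
  m(c_i)-b_i=\bigl(c_i+(p-q)\bigr)-b_i=(p-q)+(c_i-b_i)=a_i,
\]
which indeed lies in $\Bpq$ since $c_i-b_i\in\intint{-(q-1)}{(q-1)}$. Thus $(m(w')\dgwm w)=a_1a_2\cdots$, the very word made of the $\Bpq$-letters that generated the $\Dpq$-transitions.

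It then remains to observe that this word labels a run of $\That$. For each $i$ we have $\taupq(n_{i-1},a_i)=n_i$ with $a_i\in\Bpq$, which is by definition a transition $n_{i-1}\pathx{a_i}n_i$ of $\That=\aut{\N,\Bpq,\taupq,0,\N}$. Chaining these yields the path $0\pathx{a_1}n_1\pathx{a_2}n_2\cdots$ from the initial state, so every finite prefix of $(m(w')\dgwm w)$ labels a path from $0$ in $\That$; that is, $\That$ accepts $(m(w')\dgwm w)$, as required.

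I do not expect a genuine obstacle here: the content of the statement is exactly that the substitution $\omega$ was designed so as to invert the span computation letterwise. The only point demanding care is the identity $m(c)=c+(p-q)$ on $\Aq$, which is what makes the $(p-q)$ contributed by $m$ cancel against the centring constant in $\widebar{\omega}$ and leaves precisely the letter $a_i$ that produced the original $\Dpq$-transition; everything else is a routine unwinding of the two definitions of $\Dpq$ and $\That$ on a common underlying input automaton.
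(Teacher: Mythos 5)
Your argument is correct and essentially the paper's own: both reduce the statement to the letterwise identity $m(c)-b=a$ for every pair $\transpair{b}{c}\in\omega(a)$ --- the paper establishes it by checking that $b+a$ lies in $\intint{p-q}{p-1}$ and hence equals $\maxletter(b+a)$, you via the equivalent closed form $m(c)=c+(p-q)$ --- and then transfer the run of~$\Dpq$ to~$\That$ over the common underlying input automaton, a step the paper leaves implicit behind its opening ``it is enough to prove''. One harmless slip: $c+(p-q)$ is the unique maximal letter congruent to $c+p$ (not to~$c$) modulo~$q$, since $p-q\equiv p\not\equiv 0$ modulo~$q$; the formula you actually use is nonetheless correct, so the argument stands.
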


Analogously to the case of \Dpq*, \That* accepts uncountably many infinite
  words, therefore words that are not~$(\maxword{n}\dgwm\minword{n})$ for any~$n$.
That being said, it seems to be the best result we can hope, 
  as the following two corollaries hold.
\begin{corollary}
  Every finite word accepted by \That* is the prefix 
    of a span-word.
\end{corollary}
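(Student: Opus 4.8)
The plan is to reverse the construction behind \rtheorem{that-complete}, feeding a run of \That* back through the finite characterisation of \rtheorem{mpq-cc}. Let $u = a_0 a_1 \cdots a_{k-1}$ be a finite word accepted by \That*, witnessed by a run $0 = n_0 \pathx{a_0} n_1 \pathx{a_1} \cdots \pathx{a_{k-1}} n_k$. First I would lift this run to one of \Dpq*. For each $i$ the letter $a_i$ lies in \Bpq*, so $\widebar{\omega}(a_i) = a_i - (p-q)$ lies in $\intint{-(q-1)}{q-1}$, and hence the set $\omega(a_i)$ of \requation{omega} is nonempty; I pick a pair $\transpair{b_i}{c_i}$ in $\Aq \x \Aq$ with $c_i - b_i = \widebar{\omega}(a_i)$. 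By the definition of \Dpq*, each step $n_i \pathx{\transpair{b_i}{c_i}} n_{i+1}$ is then a transition of \Dpq* — it reaches the same state $\taupq(n_i, a_i) = n_{i+1}$ — so with $v = b_0 \cdots b_{k-1}$ and $v' = c_0 \cdots c_{k-1}$ we obtain a run $0 \pathx{\transpair{v}{v'}} n_k$ of \Dpq*.

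Next I would apply \rtheorem{mpq-cc} to $\transpair{v}{v'}$: it yields an integer $n$ such that $v$ is a prefix of $\minword{n}$ and $v'$ a prefix of $\minword{n+1}$, so that $v$ and $v'$ are exactly the length-$k$ prefixes of $\minword{n}$ and $\minword{n+1}$. It then suffices to show that the span-word $\maxword{n} \dgwm \minword{n} = m(\minword{n+1}) \dgwm \minword{n}$ of $n$ begins with $u$. Since $m$ is applied letterwise and $b_i, c_i$ are the $i$-th letters of $\minword{n}, \minword{n+1}$, the $i$-th letter of this span-word is $m(c_i) - b_i$, and I must verify that $m(c_i) - b_i = a_i$ for every $i < k$.

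The crux is the elementary identity $m(c) = c + (p-q)$, valid for every minimal letter $c$ in \Aq*. Indeed $m(c) = \maxletter(c+p)$ is the greatest integer congruent to $c+p$ — equivalently to $c+(p-q)$, the two differing by $q$ — modulo $q$ and strictly below $p$; as $0 \leq c \leq q-1$ one has $0 \leq c+(p-q) < p$ while $c+(p-q)+q = c+p \geq p$, so $c+(p-q)$ is precisely that integer. With this identity, $m(c_i) - b_i = (c_i - b_i) + (p-q) = \widebar{\omega}(a_i) + (p-q) = a_i$ for every $i < k$, so the length-$k$ prefix of the span-word of $n$ spells $u$, proving that $u$ is the prefix of a span-word.

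I expect the only genuine obstacle to be the bookkeeping linking the three encodings of a digit: the label $a_i$ in \That*, the lifted pair $\transpair{b_i}{c_i}$ chosen through $\omega$, and the maximal letter produced by $m$. One must confirm that composing the lift with $m$ and the digit-wise subtraction $\dgwm$ returns precisely $a_i$, rather than a value shifted by a multiple of $q$; the identity $m(c) = c+(p-q)$ is exactly what makes this work. The remaining ingredients — nonemptiness of each $\omega(a_i)$, the fact that the lifted path is a genuine run of \Dpq* through the same states, and the appeal to \rtheorem{mpq-cc} — are routine once the run of \That* has been fixed.
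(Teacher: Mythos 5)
Your proof is correct and follows exactly the route the paper intends for this corollary (which it states without explicit proof): lift the accepting run of \That* to a run of \Dpq* through the same states using the nonempty sets $\omega(a_i)$, apply \theor{mpq-cc} to obtain the integer $n$, and conclude via the identity $m(c)-b=a$ for every pair $\transpair{b}{c}$ in $\omega(a)$, which is verbatim the computation in the paper's proof of \propo{dpq->spq}. Your intermediate observation $m(c)=c+(p-q)$ for $c$ in \Aq* is a correct equivalent reformulation of that computation, so no gap remains.
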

\begin{corollary}\lcorollary{topo-closure}
  The language of infinite words of \That* is the topological closure
    of the span-words.
\end{corollary}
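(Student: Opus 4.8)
The plan is to establish the two inclusions separately, working in the usual (prefix, or Cantor) topology on the set of infinite words over~$\Bpq$: a base of open sets is formed by the cylinders consisting of all infinite words beginning with a fixed finite word~$v$ over~$\Bpq$, so that an infinite word~$w$ belongs to the topological closure of a set~$S$ precisely when every finite prefix of~$w$ is a prefix of some word of~$S$. Writing $\msp S=\Defi{\maxword{n}\dgwm\minword{n}}{n\in\N}\msp$ for the set of span-words, the goal is thus to prove $\overline{S}=\infbhv{\That}$.

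For the inclusion $\overline{S}\subseteq\infbhv{\That}$, I would first invoke \theor{that-complete}, which already gives $S\subseteq\infbhv{\That}$; it then suffices to check that $\infbhv{\That}$ is topologically closed, for the closure $\overline{S}$ is the smallest closed set containing~$S$. Closedness is the standard observation about this kind of acceptance: by definition, $w\in\infbhv{\That}$ holds if and only if $\msp 0\cdot v\msp$ is defined in \That* for every finite prefix~$v$ of~$w$. Hence, if $w\notin\infbhv{\That}$, some finite prefix~$v$ of~$w$ fails to label a path from the state~$0$, and then the whole cylinder of words beginning with~$v$ --- an open neighbourhood of~$w$ --- misses $\infbhv{\That}$ entirely. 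The complement of $\infbhv{\That}$ is therefore open, so $\infbhv{\That}$ is closed and contains~$\overline{S}$.

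For the reverse inclusion $\infbhv{\That}\subseteq\overline{S}$, I would take an arbitrary $w\in\infbhv{\That}$ and verify the closure criterion recalled above. By the very definition of acceptance of an infinite word, every finite prefix~$v$ of~$w$ lies in $\behav{\That}$, that is, is a finite word accepted by \That*. The corollary immediately preceding this one asserts that every such finite word is a prefix of a span-word; hence every prefix of~$w$ is a prefix of some element of~$S$, which is exactly the condition for $w\in\overline{S}$. Putting the two inclusions together yields $\infbhv{\That}=\overline{S}$.

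I do not expect a genuine obstacle here: all the substance has been placed in \theor{that-complete} (every span-word is accepted) and in the preceding corollary (the finite words accepted by \That* are exactly the prefixes of span-words), and the remaining argument is a routine topological repackaging. The only point that truly warrants care is to fix the topology explicitly and to record that $\infbhv{\That}$ is closed --- which holds precisely because acceptance of an infinite word is the conjunction of the acceptances of all its finite prefixes, a condition stable under passage to the limit.
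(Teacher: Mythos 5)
Your proof is correct and follows exactly the route the paper intends: the corollary is stated without proof as an immediate consequence of \theor{that-complete} (giving $S\subseteq\infbhv{\That}$, with closedness of $\infbhv{\That}$ coming from the prefix-wise acceptance condition) together with the preceding corollary (every finite word accepted by \That* is a prefix of a span-word, giving the reverse inclusion). Your explicit verification that $\infbhv{\That}$ is closed in the Cantor topology is precisely the routine step the authors left implicit.
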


In \cite{AkiyEtAl08}, it was hinted that there might be structural differences
  between two classes of rational base number systems.
Indeed, those where~${p \geq 2q-1}$  had an additional property, namely that
  every~$W_n$ contains at least two words (hence infinitely many).
It was however never proved that this property was false when~$p< 2q-1$.
The next statement provides a first element to differentiate these two classes
  of rational base number systems.
\newcounter{thmspatop}%
\setcounter{thmspatop}{\value{thm}}%

\begin{theorem}\ltheorem{spa-top}~
  \vspace*{-0.25em}
  \begin{enumerate}[label=\normalfont(\roman{*})]
    \item
      If~$(p < 2q)$, \Spq* is dense in~$[0,~\realval{\maxword{0}}]$.
    \item
      If~$(p> 2q)$, \Spq* is nowhere dense.
  \end{enumerate}
\end{theorem}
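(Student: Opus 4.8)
The plan is to recast both items as statements about the single closed set $\realvalS{\infbhv{\That}}$, the set of reals obtained by evaluating the infinite words accepted by $\That$. By the lemma $\nspan(n)=\realval{\maxword{n}\dgwm\minword{n}}$, the set $\Spq$ is exactly the image under $\realval{}$ of the set of span-words. The evaluation map is continuous on the compact space $\Bpq^{\omega}$ (the alphabet $\Bpq$ is finite and the weights $(\qps)^{i}$ are summable), so it carries the closure of a set onto the closure of its image; combined with \rcorollary{topo-closure}, which identifies $\infbhv{\That}$ with the topological closure of the span-words, this gives
\[
  \overline{\Spq}=\realvalS{\infbhv{\That}}\eqpnt
\]
Everything then reduces to understanding this one set.

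For item (i), I would first note that $p<2q$ means $p\leq 2q-1$, i.e. $p-(2q-1)\leq 0$, so that $\Ap=\intint{0}{p-1}\subseteq\intint{p-(2q-1)}{p-1}=\Bpq$. Since $\That$ uses the same transition function $\taupq$ as $\Tpq$ and merely enlarges the admissible labels, every infinite path of $\Tpq$ issued from the root is also a path of $\That$, whence $\infbhv{\Tpq}\subseteq\infbhv{\That}$. Evaluating and using the recalled identity $\realvalS{\Wpq}=\realvalS{\infbhv{\Tpq}}=[0,\realval{\maxword{0}}]$ yields
\[
  [0,\realval{\maxword{0}}]\subseteq\realvalS{\infbhv{\That}}=\overline{\Spq}\eqpnt
\]
This is precisely the density of \Spq* in $[0,\realval{\maxword{0}}]$.

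For item (ii), I would estimate Lebesgue measure. For each state $n$ let $V_n$ be the set of reals $\realval{w}$ where $w$ ranges over the infinite words labelling paths of $\That$ from $n$, so that $V_0=\realvalS{\infbhv{\That}}=\overline{\Spq}$. Factoring a word into its first letter and its suffix, and using $\realval{a\,w'}=a+\qps\,\realval{w'}$ together with the property (already established for $\That$) that every state has at most two successors, gives the self-similar decomposition
\[
  V_n=\bigcup_{a}\Bigl(a+\qps\,V_{\taupq(n,a)}\Bigr)\eqpnt
\]
a union of at most two affine copies, each contracted by the ratio $\qps$. Writing $\mu$ for Lebesgue measure and $M=\sup_{n}\mu(V_n)$ — which is finite because every $V_n$ sits inside the fixed bounded interval $\realvalS{\Bpq^{\omega}}$ — subadditivity together with the scaling $\mu(a+\qps V)=\qps\,\mu(V)$ yields $\mu(V_n)\leq 2\,\qps\,M$ for all $n$, hence $M\leq 2\,\qps\,M$. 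As $p>2q$ forces $2\,\qps<1$ while $M<\infty$, this is possible only if $M=0$. In particular $\mu(\overline{\Spq})=\mu(V_0)=0$; being closed and of measure zero, $\overline{\Spq}$ has empty interior, so \Spq* is \emph{nowhere dense}.

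The reduction and item (i) are essentially bookkeeping; the substance is in item (ii). The points requiring care are the validity of the self-similar decomposition of $V_n$ — where one must check that $\realval{}$ genuinely commutes with the branching of $\That$, which is exactly where both the one-step identity $\realval{a\,w'}=a+\qps\,\realval{w'}$ and the ``at most two successors'' property enter — and the uniform boundedness giving $M<\infty$. Once these are secured, the entire dichotomy is governed by the single contraction factor $2\,\qps$, which is $<1$ precisely when $p>2q$; reassuringly, the excluded threshold $p=2q$ cannot occur since $p$ and $q$ are coprime with $q>1$, so the two items indeed exhaust all cases.
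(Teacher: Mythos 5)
Your proof is correct; your item (i) is a streamlined version of the paper's argument, and your item (ii) follows a genuinely different route. Both treatments start from the same reduction $\overline{\Spanpq}=\realvalS{\infbhv{\That}}$, obtained from \corol{topo-closure} and the continuity of $\rho$ on the compact space of infinite words over $\Bpq$. For (i), where the paper proves the two-sided equality $\realvalS{\infbhv{\That}}=\realvalS{\infbhv{\Tpq}}$ via \lemme{valT=valThat} and \corol{valT=valThat}, you note that density only needs one inclusion, and that this inclusion is free: $p<2q$ forces $p\le 2q-1$, hence $\Ap\subseteq\Bpq$ and $\infbhv{\Tpq}\subseteq\infbhv{\That}$. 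The only content of the lemma you skip is the reverse inclusion, i.e.\ that $\overline{\Spanpq}$ is \emph{exactly} the interval (in particular that no span exceeds $\realval{\maxword{0}}$) --- worth keeping in mind if ``dense in'' is read as also asserting containment, but not needed for the density claim itself. For (ii), the paper argues combinatorially: \lemme{cantor} produces below every state a state reachable in \Tpq* but not in \That*, and \cite[Corollary~38]{AkiyEtAl08} on the near-uniqueness of representations converts this into an open subinterval avoiding \Spq* inside any interval where it were dense. You instead run a contraction/measure estimate: each $V_n$ is covered by at most two affine copies of sets $V_m$ with similarity ratio $\qps$ (the one-or-two-successors property of \That* plus the one-step identity for $\rho$), so $M=\sup_n\mu(V_n)$ --- finite since all $V_n$ lie in a common bounded set --- satisfies $M\le 2\,\qps\,M$, forcing $M=0$ precisely when $p>2q$; a closed null set has empty interior. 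Your version is self-contained (no appeal to the representation-uniqueness theorem of \cite{AkiyEtAl08}) and proves strictly more, namely that $\overline{\Spanpq}$ is Lebesgue-null, whereas the paper's version stays within the combinatorics of the trees and exhibits the excluded gaps explicitly. The checkpoints you flagged do hold: each $V_n$ is the continuous image of a closed subset of a compact product space, hence compact and measurable; and although the paper's displayed definition of $\rho$ carries an index offset, any consistent normalization yields the branching decomposition through affine maps of ratio $\qps$, which is all your estimate uses.
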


The proof of this theorem is not difficult but heavily relies on the notions and 
  properties developed in \cite{AkiyEtAl08}. 
We give a sketch of it in the appendix.

\section{Conclusion}

In the search of elucidating the structure of the set of representations of 
  integers in a rational base number system, we have shown that the 
  correspondence between
  two consecutive minimal words is achieved by a transducer that exhibits
  essentially the same structure as the one of the set of representations
  we started with.
We have called this property an ``\emph{auto-similarity}'' of the structure,
  as we have not shown that the structure is indeed \emph{self-similar}.
  
Let us note that the infinite transducer we have thus built realises the 
  correspondence for 
  \emph{all minimal words}. It is not a very good omen, but does not 
  contradict the following conjecture.

  \begin{conjecture}
  For every integer~$n$, there exists a finite transducer that 
    transforms~$\minword{n}$ into~$\minword{n+1}$.
\end{conjecture}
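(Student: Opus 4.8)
The plan is to fix~$n$ and to turn the existence of a finite transducer into a statement about a single integer sequence attached to~$n$. Running \Dpq* on the input \minword{n} from its initial state, I first identify the states it visits. Let~$(m_i)_{i\ge0}$ and~$(m'_i)_{i\ge0}$ be the node sequences of the minimal paths issued from~$n$ and from~$n+1$, so that~$m_0=n$, $m'_0=n+1$, $m_{i+1}=(p\,m_i+b_i)/q$ and~$m'_{i+1}=(p\,m'_i+c_i)/q$, where~$b_i$ is the $i$-th letter of \minword{n} and, by \theor{mpq-correct},~$c_i$ is the $i$-th letter of \minword{(n+1)}. Combining \propo{dpq-caract} with these two recurrences, a short induction shows that the state reached by \Dpq* after reading~$b_0\cdots b_{i-1}$ is exactly~$d_i-1$, where~$d_i=m'_i-m_i$ is the \emph{gap} between the two minimal paths; the base case is~$d_0=1$ and the output rule of \propo{dpq-caract} gives
\begin{equation}
  c_i=(b_i-p\,d_i)\mod{q}\eqvrg\qquad d_{i+1}=\bceil{\frac{p\,d_i-b_i}{q}}\eqpnt\notag
\end{equation}

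The decisive observation is that the output letter~$c_i$ depends on the current state only through~$d_i\bmod q$. Hence a finite sequential transducer realising~$\minword{n}\mapsto\minword{(n+1)}$ is guaranteed to exist as soon as the residue sequence~$(d_i\bmod q)_{i\ge0}$ is \emph{eventually periodic}: one unrolls the finite pre-period and turns the period into a cycle of states, each state carrying the corresponding value of~$d_i\bmod q$; at every step the machine reads~$b_i$, outputs~$(b_i-p\,d_i)\mod{q}$ from the stored value, and advances one position along the cycle. No unbounded arithmetic on~$d_i$ is then needed, since the periodic pattern of residues is precomputed and the machine merely trusts it (on inputs other than \minword{n} it may produce garbage, which the statement does not forbid). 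Thus I would reduce the conjecture to the arithmetic claim that, for every~$n$, the sequence~$(d_i\bmod q)_i$ is ultimately periodic.

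The hard part is exactly this claim, and I expect it to be the genuine obstacle. From the recurrence~$d_{i+1}=\bceil{(p\,d_i-b_i)/q}$ one sees that~$d_{i+1}\bmod q$ is governed not by~$d_i\bmod q$ but by~$d_i\bmod q^2$, and inductively~$d_{i+1}\bmod q^k$ by~$d_i\bmod q^{k+1}$: the information needed to keep the lowest digit correct leaks steadily toward higher base-$q$ digits, which is precisely why the global transducer \Dpq* has infinitely many states. Moreover the two minimal paths never merge, so the gap grows geometrically,~$d_i\sim\lambda\,(p/q)^{i}$ for a real~$\lambda>0$ determined by the values~$\realval{\minword{n}}$ and~$\realval{\minword{(n+1)}}$ — quantities this paper ties to the finite graph \That* through the span-words. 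Deciding the eventual periodicity of~$d_i\bmod q$ therefore amounts to controlling the base-$q$ digits of the exponentially growing quantity~$\lambda\,(p/q)^{i}$; for an unstructured~$\lambda$ these digits are not periodic, so any proof must exploit the special arithmetic of~$\lambda$. This places the crux of the conjecture squarely in the circle of questions around the distribution of~$(p/q)^{i}\bmod 1$ and Mahler's problem~\cite{Mahl68,AkiyEtAl08}, which is consistent with the statement being left open.

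As a concrete way in, I would first settle the cases where the gap itself is eventually benign — for instance the finitely many~$n$ for which~$n+1$ already lies on the minimal path issued from~$n$, where \minword{(n+1)} is literally a shift of \minword{n} and a finite delay-transducer suffices — and then attempt to propagate periodicity of the residues to neighbouring~$n$ by comparing the two minimal paths digit by digit. Whether this can be pushed to all~$n$, or whether some~$n$ genuinely force the input word to supply the missing high-order information (so that a finite transducer might still exist even when~$(d_i\bmod q)_i$ is aperiodic, my sufficient condition not being obviously necessary), is the point on which the whole argument will stand or fall.
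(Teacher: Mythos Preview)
The paper does not prove this statement: it is explicitly labelled a \emph{conjecture} in the Conclusion, offered without argument and prefaced by the remark that the infinitude of \Dpq ``is not a very good omen, but does not contradict'' it. There is therefore no proof in the paper for your attempt to be compared against.

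That said, your write-up is not a proof either, and you are clearly aware of this. What you have produced is a clean and correct \emph{reduction}. The identification of the state reached by \Dpq after $i$ steps on input \minword{n} with $d_i-1$, where $d_i=m'_i-m_i$ is the gap between the two minimal paths, is right and follows from \propo{dpq-caract} exactly as you say; the recurrence $d_{i+1}=\bceil{(p\,d_i-b_i)/q}$ and the output rule $c_i=(b_i-p\,d_i)\bmod q$ are both verified by a one-line computation. Your sufficient condition --- eventual periodicity of $(d_i\bmod q)_i$ implies a finite transducer --- is also correct: the machine you describe (a simple cycle that ignores the input for its state transitions and uses only the stored residue to compute the output) does the job on the single input \minword{n}, which is all the conjecture asks.

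Where you stop is exactly where the difficulty lies, and your diagnosis is accurate. The residue $d_{i+1}\bmod q$ genuinely depends on $d_i\bmod q^2$, so the ``memory leak'' you describe is real, and controlling $(d_i\bmod q)_i$ amounts to controlling low-order base-$q$ digits of a quantity growing like $(p/q)^i$ --- precisely the Mahler-type obstruction the paper alludes to. Your closing caveat is also well taken: periodicity of the residues is sufficient but not obviously necessary, since a finite transducer might exploit the input word \minword{n} in some other way. In short, you have correctly located the conjecture's crux and given a useful reformulation, but not a proof; neither has the paper.
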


It is also remarkable that in this construction, the case~$p=2q-1$ appears as
  the frontier between two completely different behaviours of the system, in a
  much stronger way than it was described in our first work on rational base
  number systems.

%

\newcommand{\OneLetter}[1]{#1}


\clearpage
\newcounter{theortemp}
\setcounter{section}{0}
\renewcommand{\thesection}{A.\arabic{section}}
\renewcommand{\thefigure}{A.\arabic{figure}}

\begin{center}
    {\Large \textbf{Appendix}}
\end{center}
\vspace*{2em}
The section headings and numbers of the paper body are recalled, and 
prefixed with an~A, for an easier navigation.

\setcounter{section}{2}
\section{The derived transducer}
   \lsection{der-tra-app}%

   \subsection{From \Tpq* to \Dpq*}


   \subsubsection{Changing the alphabets}~
   
\begin{figure}[ht!] 
\setlength{\vmord}{1.25cm}%
\setlength{\vmabs}{2.5cm}%
  \FixVCScale{0.5}%
  \centering
  \VCCall{p4q3_L-hat_leq35.tex}
  \caption{Transforming~$\Tc_\frac{4}{3}$ 
           into~$\widehat{\Tc_{\frac{4}{3}}}$} 
  \lfigure{l43->m43}
\end{figure}  


   \subsubsection{Changing the labels}



\begin{figure}[ht!]
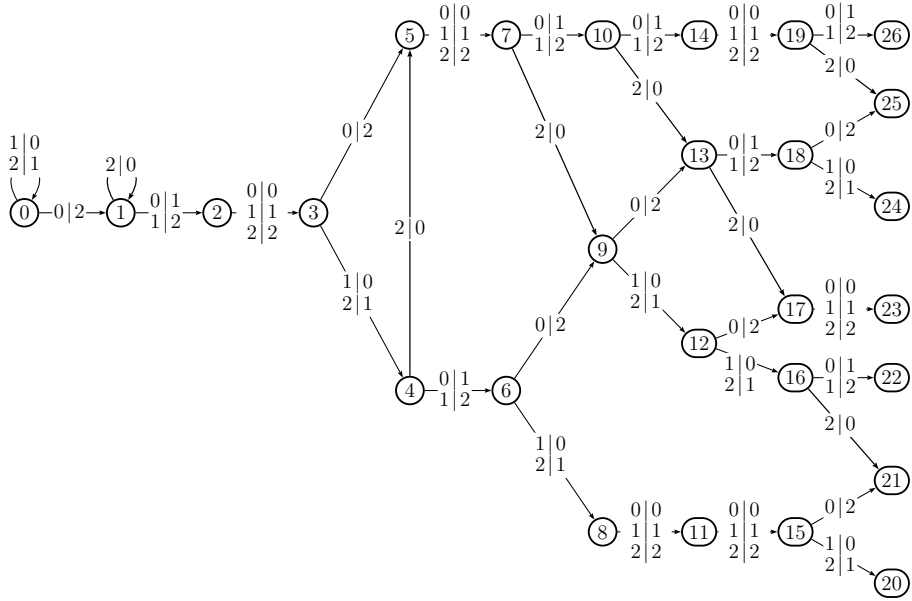

\setlength{\vmord}{2.33cm}
\setlength{\vmabs}{3.25cm}
  \centering
  \scalebox{0.65}{\VCCall{p4q3_D_leq35.tex}}
  \caption{The derived transducer~$\mathcal D_{\frac{4}{3}}$}
  \lfigure{m43-app}
\end{figure}

\setcounter{tmpthm}{\value{thm}}
\setcounter{thm}{\value{lemma-seqic}}

\begin{lemma}
  For all state~$n$ of \Dpq* and every letter~$b$ of \Aq*, there exists
    a unique state~$m$ and a unique letter~$c$ 
    such that~${\msp n\pathx{\transpair{b}{c}}m}$.
\end{lemma}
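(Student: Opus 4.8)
The plan is to fix a state~$n$ and read off its outgoing transitions in \Dpq* directly from the one or two outgoing labels of~$n$ in \That* together with the label–replacement map~$\omega$. By definition, the outgoing labels of~$n$ in \That* are the letters $a\in\Bpq$ such that $q$ divides $(np+a)$, that is, the $a\in\Bpq$ with $a\equiv-np~[q]$; I would write each such $a$ as $a=(p-q)+d$, so that $d=\widebar{\omega}(a)$ is the offset of~$a$ from the centre~$p-q$ and ranges over $\intint{-(q-1)}{q-1}$ as $a$ ranges over \Bpq*. Since~$a$ is determined by~$d$, the statement to prove is exactly that, as $a$ runs over the outgoing labels of~$n$ and $(b,c)$ runs over $\omega(a)$, the first component~$b$ runs over \Aq* once and only once. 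First I would settle how many outgoing labels~$n$ has: among the offsets $d\in\intint{-(q-1)}{q-1}$, every nonzero residue modulo~$q$ is attained by two values ($d$ and $d-q$, or $d$ and $d+q$), while the residue of the centre ($d=0$) is attained only once; hence~$n$ has a single outgoing label exactly when $a\equiv p-q~[q]$, that is when $-np\equiv p-q~[q]$, that is — using $\gcd(p,q)=1$ — when $n\equiv-1~[q]$, and two outgoing labels otherwise.

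The two cases are then routine. If $n\equiv-1~[q]$, the unique label is the centre $a=p-q$, so $\widebar{\omega}(a)=0$ and, by~\requation{omega}, $\omega(a)=\Defi{\transpair{b}{b}}{b\in\Aq}$; each~$b$ in \Aq* occurs once, with forced output $c=b$. If $n\not\equiv-1~[q]$, the two labels $a_1<a_2=a_1+q$ have offsets $d_1<d_2=d_1+q$, both lying in $\intint{-(q-1)}{q-1}$; a difference of~$q$ between two elements of this interval forces $d_2=k$ for some $k\in\intint{1}{q-1}$ and $d_1=k-q$. Feeding these offsets into~\requation{omega}, the pairs of $\omega(a_2)$ have first components forming the interval $\intint{0}{q-1-k}$, and those of $\omega(a_1)$ form the interval $\intint{q-k}{q-1}$; these two integer intervals are disjoint and their union is \Aq*.

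In every case, therefore, each~$b$ in \Aq* is the first component of a pair coming from exactly one outgoing label~$a_i$, and within $\omega(a_i)$ that pair is unique since its output is forced to be $c=b+\widebar{\omega}(a_i)$. This yields, for each~$b$, a single transition $n\pathx{\transpair{b}{c}}\taupq(n,a_i)$, hence a unique state $m=\taupq(n,a_i)$ and a unique letter~$c$, which is the assertion. The one place where care is genuinely needed — the \emph{main obstacle}, though it is purely combinatorial — is the two-successor computation: that the two offsets are exactly~$k$ and~$k-q$, so that the associated input ranges $\intint{0}{q-1-k}$ and $\intint{q-k}{q-1}$ are complementary. It is here that the oddness of $|\Bpq|=2q-1$ and the choice of $p-q$ as its centre enter essentially; any other width or centre would break the partition of \Aq*.
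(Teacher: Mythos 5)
Your proof is correct and takes essentially the same route as the paper's: the same case split on $n\equiv-1~[q]$, the same observation that the centre letter $p-q$ yields the diagonal pairs $\transpair{b}{b}$, and, in the two-successor case, the same key fact that the two offsets differ by~$q$ so that the associated input letters cover $\Aq$ exactly once --- you merely state this as an explicit partition of $\Aq$ into the complementary intervals $\intint{0}{q-1-k}$ and $\intint{q-k}{q-1}$, where the paper argues letter by letter that exactly one of $d+\widebar{\omega}(a)$, $d+\widebar{\omega}(a-q)$ lies in $A_q$. Your residue-counting verification that $n$ has one successor in $\That$ precisely when $n\equiv-1~[q]$ (and two otherwise) is a welcome bonus: the paper invokes this structural fact from an earlier ``easy to verify'' remark rather than proving it inside the lemma.
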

\setcounter{thm}{\value{tmpthm}}
\begin{proof} 
  \emph{Case where~$n$ is congruent to~$-1$ modulo~$q$:} by definition ~$n$
    has a unique successor associated with the letter~$(p-q)$, 
    $n \pathx{\omega(p-q)} \left(\frac{n\xmd p+a}{q}\right)$.
  In this case, the lemma's statement is immediate as~$\widebar{\omega}(p-q)=0$,
    hence~$\omega(p-q)$ is constituted of every pair~$\transpair{b}{b}$, 
    for every~$b$ in~$A_q$.
  
  \medskip
    
  \emph{Case where~$n$ is not congruent to~$-1$ modulo~$q$:} 
    let~$a$ be a maximal letter different than~$(p-q)$, 
    and~$d$ a minimal letter.
  It is sufficient to prove 
    that~${\msp\omega(a)\xmd\cup\xmd\omega(a-q)\msp}$ contains 
    exactly one pair 
    of the form~$\transpair{d}{e}$ for some~$e$.
    
  Since $\msp{\widebar{\omega}(a)=(a-p+q)}\msp$ 
    and~$\msp{\widebar{\omega}(a-q)=(a-p)}$,
    the difference between the two is~$q$, 
    hence at most one integer 
    of~${\set{~(d+\widebar{\omega}(a)),~(d+\widebar{\omega}(a-q))~}}$
    is in~$A_q$.
    
  Since~$a$ is a maximal letter, $\widebar{\omega}(a)$ is contained 
    in~$\intint{0}{(q-1)}$,
    as is~$d$, by definition of~$A_q$.
  Follows that~$\msp d+\widebar{\omega}(a) \msp$ is 
    in~$\intint{0}{2(q-1)}$, hence
    either~$\msp{d+\widebar{\omega}(a)}\msp$ is in $A_q$, 
    or it is in~$\intint{q}{2(q-1)}$,
    in which case~${(d+\widebar{\omega}(a-q))= (d+\widebar{\omega}(a)-q)}$ 
    is in~$A_q$.
\end{proof}

\subsection{Correctness of~$\Dpq$}
  \lsection{proof-app}

We establish now that \Dpq* has the expected behaviour, that is, we 
prove the main \theor{mpq-correct}
  as stated in the following.

\setcounter{tmpthm}{\value{thm}}
\setcounter{thm}{\value{thmmpscor}}

\begin{theorem}
	\ltheorem{mpq-correct.proof}%
  For every~$n$ in~$\N$, $\msp\Mpq(\minword{n})=\minword{(n+1)}\msp$.
\end{theorem}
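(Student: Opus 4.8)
The plan is to prove \rtheorem{mpq-correct} by reducing it to the already-stated \rproposition{dpq-caract}, which gives an explicit arithmetic description of each transition of \Dpq*. The key observation is that \rproposition{dpq-caract} already encodes the essential number-theoretic content: it tells us that whenever $\msp n\pathx{\transpair{b}{c}}m\msp$ is a transition, the output letter~$c$ and target state~$m$ are determined by the formulas
\begin{equation}
  c = (b-(n+1)\xmd p)\modq
  \ee\text{and}\ee
  m=\bceil{\frac{(n+1)\xmd p-b}{q}-1}
  \eqpnt
  \notag
\end{equation}
So the whole theorem should follow once I connect these formulas to the action of~$\taupq$ on the states of \Tpq*.

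First I would establish the base behaviour: recall (from \rdefinition{minword} and the remarks following it) that $\minword{n}$ is the \emph{unique} infinite word in \Aqo* labelling a path of \Tpq* starting at~$n$, and that its first letter is the unique minimal letter~$b$ with $\msp n\cdot b\msp$ defined. The strategy is then an inductive/coinductive argument on the length of prefixes. Concretely, I would prove by induction on~$k$ that for every~$n$, the transducer \Dpq* reading the length-$k$ prefix of $\minword{n}$ produces exactly the length-$k$ prefix of $\minword{(n+1)}$, while moving from state~$n$ to the state~$\taupq(n+1,\cdot)$-successor along the minimal branch starting at~$n+1$. The inductive step is where \rproposition{dpq-caract} does the work: given that \Dpq* is at state~$n$ reading the minimal letter~$b$ of $\minword{n}$, the proposition tells me the output is $c=(b-(n+1)p)\modq$ and the new state is $m=\bceil{((n+1)p-b)/q-1}$. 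I then have to check two things in tandem: that~$c$ is precisely the minimal outgoing letter of the state~$(n+1)$ in \Tpq* (equivalently, the first letter of $\minword{(n+1)}$), and that~$m$ is exactly the minimal successor of~$(n+1)$, so that the induction can continue with the pair $\transpair{\minword{m'}}{\minword{(m'+1)}}$ where $m'$ is the minimal successor of~$n$.

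The heart of the verification is the arithmetic identity relating the two branches. Since~$b$ is minimal we have $b\in\Aq=\intint{0}{q-1}$, and I would check that the displayed formula for~$c$ indeed lands in~$\Aq$ and equals the unique minimal letter out of~$(n+1)$, i.e. the~$c$ with $q\mid((n+1)p+c)$ lifted into~$\intint{0}{q-1}$; this is a direct congruence computation modulo~$q$. Simultaneously I must confirm that $\taupq(n+1,c)=((n+1)p+c)/q$ coincides with the value~$m$ given by the ceiling formula, and that the minimal successor of~$n$ (target of the $b$-branch in \Tpq*) is exactly $m-1$, so that after one step the pair of states differs again by one. The \emph{main obstacle} I anticipate is precisely this bookkeeping of the ``$+1$ shift'': showing that the off-by-one relationship between consecutive minimal words is preserved at every step, i.e.\ that the ceiling and modular-reduction formulas conspire so that $m = \taupq(n+1,c)$ and $m-1 = \taupq(n,b)$ hold simultaneously. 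Once this single-step compatibility is nailed down, the infinite-word statement follows by taking the limit over prefixes, using that images of infinite words by \Dpq* are defined prefix-wise (as recorded just before \rdefinition{tpq}) and that each $\minword{n}$ is the unique such minimal branch, so no ambiguity can arise. I would close by remarking that this same induction, read on finite prefixes, yields the stronger \rtheorem{mpq-cc}.
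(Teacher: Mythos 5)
There is a genuine gap: your induction invariant mis-identifies the transducer's state, and both of the single-step checks you rely on are false. By the definition of the image of a word (\secti{automata}), $\Dpq(\minword{n})$ is computed by a run starting at the \emph{initial} state~$0$ for every~$n$, whereas your invariant places \Dpq* at state~$n$ while it reads~$\minword{n}$, thereby conflating the transducer's state with the \Tpq*-state of the input branch; these two state sequences diverge after the very first step. Quantitatively: if the transducer is at state~$i$ and reads a letter~$b$ outgoing from~$n$ in \Tpq* (so~${n\xmd p+b\equiv 0\pmod q}$), \propo{dpq-caract} gives~${c=(b-(i+1)\xmd p)\mod{q}}$, hence~${(n+i+1)\xmd p+c\equiv n\xmd p+b\equiv 0\pmod q}$: the output letter~$c$ is the minimal outgoing letter of~$n+i+1$, \emph{not} of~$n+1$, unless~$i=0$. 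Under your assignment~$i=n$ the run would in fact compute~$\minword{2n+1}$. For a concrete failure take the base~$\tds$ and~$n=1$: the minimal letter of state~$1$ is~$b=1$; the transducer at state~$1$ reading~$b$ outputs~$c=(1-6)\mod 2=1$, while the minimal letter of state~$2$ is~$0$. Moreover your reconciling identity~$m=\taupq(n+1,c)$ is not merely false but meaningless: since~${(n+1)\xmd p+c\equiv b\pmod q}$, the value~$\taupq(n+1,c)$ is undefined whenever~$b\neq 0$ (here~$(n+1)\xmd p+c=7$ is odd); and~$m=\bceil{\frac{(n+1)\xmd p-b}{q}-1}=2$ is neither the minimal successor~$3$ of state~$2$ nor one more than the minimal successor~$2$ of state~$1$, so the claimed ``$+1$ shift'' of states cannot be restored.

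The repair is exactly the paper's route: keep the transducer state as an \emph{independent} parameter and prove the stronger \propo{mpq-correct} --- if~$n\pathx{u}m$ in \Tpq* and~$i\pathx{\transpair{u}{v}}j$ in \Dpq*, then~${(n+i+1)\pathx{v}(m+j+1)}$ in \Tpq* --- by the very single-letter computation you sketch (now with the shift~$n+i+1$ in place of~$n+1$), followed by induction on~$\wlength{u}$. The theorem is the instance~$i=0$: the output~$v$ produced from state~$0$ on a prefix~$u$ of~$\minword{n}$ is a word over \Aq* labelling a path from~$n+1$, hence a prefix of~$\minword{n+1}$ by uniqueness of the minimal branch, and your prefix-wise limit argument then concludes. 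In short, the correct bookkeeping is ``output-path state $=$ input-path state $+$ transducer state $+\,1$'', not ``transducer state $=$ input-path state''; apart from this fatal (but fixable) invariant, your skeleton --- \propo{dpq-caract}, one-step verification, induction on prefixes, passage to the limit --- does coincide with the paper's.
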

  
\setcounter{thm}{\value{tmpthm}}
  
  
  
After the description of~$\Dpq$ by a transformation of~$\Tpq$, we 
characterise its transition and output functions by relations that 
will be used in further demonstrations.
\setcounter{tmpthm}{\value{thm}}
\setcounter{thm}{\value{propo-dpq-caract}}
\begin{proposition}
	\lproposition{dpq-caract.proof}%
  If $\msp n \pathx{\transpair{b}{c}} m \msp$
  is a transition of~$\Dpq$,
  then
  \begin{equation}
  	c = (b-(n+1)\xmd p)\mod{q}
	\ee\text{and}\ee
	m=\bceil{\frac{(n+1) \xmd p-b}{q}-1}
	\eqpnt
	\notag
  \end{equation}
\end{proposition}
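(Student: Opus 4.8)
The plan is to unwind the construction of $\Dpq$ and then carry out a short modular computation. By the defining implication \requation{dpq}, a transition $n\pathx{\transpair{b}{c}}m$ of $\Dpq$ must come from some letter $a$ of $\Bpq$ for which $\taupq(n,a)$ is defined and $(\transpair{b}{c})\in\omega(a)$. First I would extract the two pieces of information this yields. From $(\transpair{b}{c})\in\omega(a)$ and the definition \requation{omega} of $\omega$ through $\widebar{\omega}$, we get $c-b=\widebar{\omega}(a)=a-(p-q)$, whence $a=(c-b)+(p-q)$. From $\taupq(n,a)$ being defined we get that $q$ divides $(n\xmd p+a)$ and that $m=(n\xmd p+a)/q$.

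For the congruence giving $c$, I would substitute $a=(c-b)+(p-q)$ into $n\xmd p+a\equiv 0\pmod q$. Since $-q\equiv 0\pmod q$, this reduces to $c\equiv b-(n+1)\xmd p\pmod q$. Because $(\transpair{b}{c})$ is a pair of minimal letters, $c$ lies in $\Aq=\intint{0}{q-1}$ and is therefore the unique representative of its residue class in that range; this is exactly the assertion $c=(b-(n+1)\xmd p)\mod{q}$.

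For the value of $m$, I would substitute the same expression for $a$ into $m=(n\xmd p+a)/q$, obtaining $m=\frac{(n+1)\xmd p-b+c}{q}-1$. The crux is to recognise $\frac{(n+1)\xmd p-b+c}{q}$ as $\bceil{\frac{(n+1)\xmd p-b}{q}}$. Indeed $(n+1)\xmd p-b+c$ is divisible by $q$ (equivalently, $m$ is an integer), as follows from the congruence already established, while $c\in\intint{0}{q-1}$; hence $(n+1)\xmd p-b+c$ is the least multiple of $q$ that is not smaller than $(n+1)\xmd p-b$, namely $q\,\bceil{\frac{(n+1)\xmd p-b}{q}}$. Dividing by $q$ and subtracting the integer $1$ — which commutes with the ceiling — yields $m=\bceil{\frac{(n+1)\xmd p-b}{q}}-1=\bceil{\frac{(n+1)\xmd p-b}{q}-1}$.

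The computation is elementary throughout; the one step I would write out with care is this last identification of $(n+1)\xmd p-b+c$ with $q\,\bceil{\frac{(n+1)\xmd p-b}{q}}$, where the role of $\Aq$ as the residue system $\intint{0}{q-1}$ is decisive. If a fully explicit argument is preferred, I would split according to whether $(n+1)\xmd p-b$ is already a multiple of $q$ (in which case $c=0$ and the ceiling is attained) or not (in which case $c$ is the difference between $q$ and the remainder of $(n+1)\xmd p-b$ modulo $q$, supplying exactly the gap to the next multiple).
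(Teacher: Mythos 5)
Your proof is correct and follows essentially the same route as the paper's: recover $a=(p-q)+(c-b)$ from $(\transpair{b}{c})\in\omega(a)$, use the divisibility condition in the definition of $\taupq$ to get the congruence for $c$, and substitute into $m=(n\xmd p+a)/q$ to get $m=\frac{(n+1)\xmd p-b+c}{q}-1$. Your handling of the final ceiling step (identifying $(n+1)\xmd p-b+c$ as the least multiple of $q$ not smaller than $(n+1)\xmd p-b$) is equivalent to the paper's bound $0\leq \frac{c}{q}<1$, and in fact your written simplification is the correct one --- the paper's displayed expression $m=\frac{(n+1)\xmd p - b}{q}-\frac{c}{q}-1$ carries a sign typo (it should read $+\frac{c}{q}$).
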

\setcounter{thm}{\value{tmpthm}}

\begin{proof}
  If~$n \pathx{\transpair{b}{c}} m$, is a transition
    of \Dpq*, 
	then by hypothesis there exists a letter~$a$ in~$\Bpq$ such 
    that~$\transpair{b}{c}$ is
    in~$\omega(a)$, in which case:~$\widebar{\omega}(a)=c-b$, 
    hence~\fbox{$a=(p-q)+(c-b)$}.
    
  From~\requation{dpq}, we know that~$(n\xmd p +a)$ is 
    congruent to~$0$ modulo~$q$. 
  By replacing~$a$ with~$((p-q)+(c-b))$, we finally obtain that~$c$ is 
    congruent to~$(b-(n+1)\xmd p)$ modulo~$q$.
  Since~$c$ is in~$A_q$,~$c=(b-(n+1)\xmd p)\modq$
  
  From~\requation{dpq}, we know as well that~$m=\frac{n\xmd p + a}{q}$, 
    hence~${m=\frac{n\xmd p + ((p-q)+(c-b))}{q}}$, and after simplification,
    $\msp m=\frac{(n+1)\xmd p - b}{q}-\frac{c}{q}-1$.
  Since~$c$ is in~$A_p$, it is strictly smaller than~$q$, then
    $0\leq\frac{c}{q} < 1$ which concludes the proof.
\end{proof}

\rtheorem{mpq-correct} is then is a corollary of the next 
  proposition which describes the
  behaviour of \Mpq* starting from all states, not only the initial one.
\begin{proposition}\lproposition{mpq-correct}
  Let~$u$ and~$v$ be two words over \Aq*. 
  If~$n \pathx{u} m$ in \Tpq*
    and~$i \pathx{\transpair{u}{v}} j$ in \Dpq*, 
    then~${(n+i+1)\pathx{v}(m+j+1)}$ in \Tpq*.
\end{proposition}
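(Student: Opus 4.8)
The plan is to argue by induction on the common length $\wlength{u}=\wlength{v}$, which are equal since \Dpq* is letter-to-letter. When $u=v=\varepsilon$ we have $n=m$ and $i=j$, so $(n+i+1)=(m+j+1)$ and the empty path gives the claim. For the inductive step I would write $u=u'\xmd a$ and $v=v'\xmd c$ with $a,c$ in \Aq*, and split the two runs as $n\pathx{u'}m'$, $m'\pathx{a}m$ in \Tpq* and $i\pathx{\transpair{u'}{v'}}j'$, $j'\pathx{\transpair{a}{c}}j$ in \Dpq*. Applying the induction hypothesis to $u',v'$ yields $(n+i+1)\pathx{v'}(m'+j'+1)$ in \Tpq*; concatenating a single $c$-transition after it then reduces the whole statement to the following one-letter claim: if $m'\pathx{a}m$ is a transition of \Tpq* and $j'\pathx{\transpair{a}{c}}j$ is a transition of \Dpq*, then $(m'+j'+1)\pathx{c}(m+j+1)$ is a transition of \Tpq*.

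To prove this one-letter step I would feed in the explicit descriptions already available. The \Tpq*-transition gives $m=\frac{m'\xmd p+a}{q}$ with $q\mid(m'\xmd p+a)$, and \rproposition{dpq-caract} turns the \Dpq*-transition into $c=(a-(j'+1)\xmd p)\modq$ and $j=\bceil{\frac{(j'+1)\xmd p-a}{q}-1}$. The transition $(m'+j'+1)\pathx{c}(m+j+1)$ exists in \Tpq* precisely when $(m'+j'+1)\xmd p+c$ is divisible by $q$ and its quotient is $m+j+1$, so there are two things to verify. Writing
\[
  (m'+j'+1)\xmd p+c \e=\e (m'\xmd p+a)+\bigl((j'+1)\xmd p+c-a\bigr)
  \eqvrg
\]
divisibility is immediate: the first summand is $\equiv 0$ by the \Tpq*-transition, and the second is $\equiv 0$ because $c\equiv a-(j'+1)\xmd p \pmod q$ by the formula for $c$.

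Dividing the same decomposition by $q$ gives $\frac{(m'+j'+1)\xmd p+c}{q}=m+\frac{(j'+1)\xmd p-a+c}{q}$, so the value check reduces to the single identity $(j'+1)\xmd p-a+c=q\,(j+1)$. Setting $X=(j'+1)\xmd p-a$, the formula for $j$ reads $j=\bceil{X/q}-1$ (pulling the integer $-1$ out of the ceiling), so $j+1=\bceil{X/q}$, and the required identity becomes $c=q\bceil{X/q}-X$, that is, $c$ is the residue of $-X$ in $\intint{0}{q-1}$ — which is exactly what $c=(a-(j'+1)\xmd p)\modq$ says. I expect this last point, reconciling the ceiling in the formula for $j$ with the remainder defining $c$, to be the only delicate step; it is settled by a short case split according to whether or not $q$ divides $X$ (giving $c=0$ in the first case and $c=q-r$ with $r=X\modq$ in the second). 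Everything else is congruence bookkeeping and the concatenation of two paths, so no further obstacle is anticipated.
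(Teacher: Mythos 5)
Your proof is correct and takes essentially the same route as the paper's: the paper likewise settles the single-letter case by combining the definition of~$\taupq$ with \rproposition{dpq-caract} --- its direct computation of the successor of~$(n+i+1)$ rests on exactly your identity ${X+((-X)\modq)=q\,\bceil{X/q}}$ --- and then invokes a ``simple induction over the length of~$u$'', which is the induction you spell out. The only difference is presentational: you verify explicitly the divisibility giving the existence of the transition out of~$(m'+j'+1)$, a point the paper dismisses as ``routine to check''.
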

\begin{proof}
  Let us first consider the special case where~$u$ is a single letter~$a$.
  The first hypothesis implies (from~\requation{tau}) 
    that~$m=\frac{n\xmd p+a}{q}$;
    the second (from~\rproposition{dpq-caract}) that~$v$ is 
    the single letter~$(a-(i+1)\xmd p)\mod{q}$
    and~${j=\ceil{\frac{(i+1) \xmd p-a}{q}-1}}$.
  
  It is routine to check that~$(a-(i+1)\xmd p)\mod{q}$ is 
    indeed an outgoing letter of~$(n+i+1)$.
  The successor in \Tpq* of~$(n+i+1)$ by this letter is
  
  \begin{align*}
    \frac{(n+i+1)p + (a-(i+1)\xmd p)\mod{q}}{q} \e = \e & \frac{ n\xmd p}{q}+\frac{(i+1)\xmd p}{q}+\frac{(a-(i+1)\xmd p)\mod{q}}{q} \\
                                              = \e & m+\frac{(i+1)\xmd p-a}{q} +\frac{(a-(i+1)\xmd p)\mod{q}}{q} \\
                                              = \e & m+\bceil{\frac{(i+1)\xmd p - a}{q}} \\
                                              = \e & m+j+1 \\
  \end{align*}
  The general case then consists in a simple induction over the length of~$u$.
\end{proof}
%
\begin{remark}[Locality]\lremark{locality}
  At the start of \rsection{der-tra}, it was claimed 
    that the transformation from \Tpq* to \Dpq* is local.
  Although it  undoubtedly is when~${p \geq (2q-1)}$,
    it is less clear when~$p < (2q-1)$.
    
  Indeed, at some point, one has to add an edge~$n\pathx{} (m-1)$ while having 
    access to the edge~$n\pathx{} m$, and must then access the state~$(m-1)$.
  Considering \Tpq* has an undirected graph, the path from~$(m-1)$ to~$n$ can be 
    arbitrarily large, which would contradict locality.
  However, we deemed it reasonable to have access to either a map from~$\N$ to the 
    states of \Tpq* 
    or simply a `decrementer' operator linking every state~$m$ to~$(m-1)$.
\end{remark}

\section{Span of a node}
\lsection{spa-nod-app}
  
\setcounter{tmpthm}{\value{thm}}
\setcounter{thm}{\value{propo-dpq->spq}}
\begin{proposition}\lproposition{dpq->spq.proof}
  If~$\transpair{w}{w'}$ is a pair of infinite words accepted by \Dpq*
  then \That* accepts the word~$(m(w')\dgwm w)$.
\end{proposition}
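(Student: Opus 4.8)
The plan is to unwind the substitution~$\omega$ through which \Dpq* is built from \That*, and to recognise the difference word~$(m(w')\dgwm w)$ as nothing but the sequence of labels of \That* underlying the accepting run of~$\transpair{w}{w'}$ in \Dpq*. Once this is done, that word labels the very same run in \That*, which is exactly what has to be proved.

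First I would isolate the single arithmetic identity on which everything rests: for a minimal letter~$c$ in \Aq*, one has~$m(c)=c+p-q$. Indeed, $m(c)=\maxletter(c+p)$ is by definition the greatest integer congruent to~$(c+p)$ modulo~$q$ and strictly smaller than~$p$; since~$c+p\geq p$ whereas~$c+p-q<p$ (because~$c<q$), that integer is~$c+p-q$, which moreover lies in the maximal alphabet~$\intint{(p-q)}{(p-1)}$.

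Then I would localise the statement to a single transition. Write~$w=b_1b_2\cdots$ and~$w'=c_1c_2\cdots$, and let~$n_{i-1}\pathx{\transpair{b_i}{c_i}}n_i$ (with~$n_0=0$) denote the successive transitions of the run of~$\transpair{w}{w'}$ in \Dpq*. By the construction of \Dpq* from \That* recalled in~\requation{dpq}, each such transition arises from a transition~$n_{i-1}\pathx{a_i}n_i$ of \That*, with~$a_i\in\Bpq$ and~$\transpair{b_i}{c_i}\in\omega(a_i)$; hence~$c_i-b_i=\widebar{\omega}(a_i)=a_i-(p-q)$ by~\requation{omega}, that is~$a_i=(p-q)+(c_i-b_i)$. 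Combining this with the identity above, the $i$-th letter of~$(m(w')\dgwm w)$ equals~$m(c_i)-b_i=(c_i+p-q)-b_i=(p-q)+(c_i-b_i)=a_i$.

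It then remains only to assemble these equalities. The previous computation shows that~$(m(w')\dgwm w)=a_1a_2\cdots$, while by construction the states~$n_0=0,n_1,n_2,\ldots$ are exactly those visited by the run~$n_{i-1}\pathx{a_i}n_i$ of \That*; in particular every finite prefix of~$a_1a_2\cdots$ labels a path of \That* from the initial state~$0$, so that~$(m(w')\dgwm w)$ is accepted by \That*. I do not expect a genuine obstacle: once the identity~$m(c)=c+p-q$ is isolated, the result is a purely formal consequence of the fact that \Dpq* is the image of \That* under the substitution~$\omega$, the only point needing a little care being that~$a_i$ is recovered from the pair~$\transpair{b_i}{c_i}$ uniformly, whether~$n_{i-1}$ has a single successor (the case~$n_{i-1}\equiv-1~[q]$, where~$\widebar{\omega}(p-q)=0$) or two. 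Specialising finally to~$w=\minword{n}$ and~$w'=\minword{(n+1)}$ yields~\rtheorem{that-complete}.
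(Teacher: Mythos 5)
Your proof is correct and takes essentially the same route as the paper's: both reduce the statement to the single-letter identity $m(c)-b=a$ for every pair~$\transpair{b}{c}$ in~$\omega(a)$, and then conclude from the fact that~$\Dpq$ is the image of~$\That$ under the substitution~$\omega$, so the difference word~$(m(w')\dgwm w)$ labels the very run of~$\That$ underlying the run of~$\transpair{w}{w'}$ in~$\Dpq$. The only (immaterial) difference is how the identity is checked: you compute~$m(c)=c+p-q$ directly for~$c$ in~$\Aq$, whereas the paper verifies that~$(b+a)$ lies in the maximal alphabet so that~$\maxletter(b+a)=b+a$.
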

\begin{proof}
  It is enough to prove that, for every pair~$\transpair{b}{c}$ 
    in~$\omega(a)$,~$m(c)-b=a$.
  With this denotation, by definition 
    of \Dpq* (and more particularly~$\omega$, from 
    \requation{omega}),~${c=b+\widebar{\omega}(a)}$ and~${0\leq c < q}$,
    or more precisely
  $$\begin{array}{rcl}
    0     \leq& c                       &< q \\
    0     \leq& b + \widebar{\omega}(a) &< q \\
    0     \leq& b + a-(p-q)             &< q \\
    (p-q) \leq& (b+a)                   &< p \\
  \end{array}$$
  Therefore,~$(b+a)$ is a maximal letter, 
  hence~${\maxletter(b+a)=b+a}$, and finally~$m(c)-b=a$ 
  when replacing~$c$ and~$\widebar{\omega}$ by their expression.
\end{proof}
\setcounter{thm}{\value{tmpthm}}
\setcounter{tmpthm}{\value{thm}}
\setcounter{thm}{\value{thmspatop}}
\begin{samepage}
\begin{theorem}\ltheorem{spa-top.proof}~
  \vspace*{-0.25em}
  \begin{enumerate}[label=\normalfont(\roman{*})]
    \item
      If~$(p < 2q)$, \Spq* is dense in~$[0,~\realval{\maxword{0}}]$.
    \item
      If~$(p> 2q)$, \Spq* is nowhere dense.
  \end{enumerate}
\end{theorem}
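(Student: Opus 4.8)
The plan is to transfer the topological question about the set of spans \Spq* into a statement about the real numbers carried by the infinite branches of \That*, and then to treat the two regimes by completely different means. The bridge is \corol{topo-closure}, which identifies the closure of the set of span-words (inside $\Bpq^{\xmd\omega}$) with the language $\infbhv{\That}$. Since the real evaluation $\realval{}$ is given by a series whose $i$-th term is bounded in modulus by a constant times $(q/p)^{i}$, with $q/p<1$, it is uniformly continuous on the compact space $\Bpq^{\xmd\omega}$; because that space is compact, $\realval{}$ sends the topological closure of any set of infinite words onto the closure of the corresponding set of reals. Applying this to the span-words, invoking \corol{topo-closure}, and recalling that \Spq* is by definition their set of real values, I obtain $\overline{\Spq}=\realvalS{\infbhv{\That}}$, so that the theorem reduces to describing this set.

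For part~(i), where $p<2q$, the alphabet \Bpq* contains the canonical alphabet \Ap*, so every transition of \Tpq* is also a transition of \That*, whence $\infbhv{\Tpq}\subseteq\infbhv{\That}$. Applying $\realval{}$ and using the result of \cite{AkiyEtAl08} that $\realvalS{\Wpq}=[0,\realval{\maxword{0}}]$ (recall $\Wpq=\infbhv{\Tpq}$), I obtain $[0,\realval{\maxword{0}}]\subseteq\realvalS{\infbhv{\That}}=\overline{\Spq}$, which is exactly the density of \Spq* in $[0,\realval{\maxword{0}}]$. Once the reduction is in place, this case is immediate.

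For part~(ii), where $p>2q$, I would instead bound the Lebesgue measure, written $\lambda(\cdot)$, of $\realvalS{\infbhv{\That}}$. The decisive structural fact is that in \That* every state has out-degree at most~$2$ (exactly one when it is congruent to $-1$ modulo~$q$, and exactly two otherwise), so there are at most $2^{\ell}$ paths of length~$\ell$ leaving the root. Each such length-$\ell$ prefix confines the value of every branch extending it to an interval of length at most $(2q-2)\,\frac{p}{p-q}\,\bigl(\frac{q}{p}\bigr)^{\ell}$, since $2q-2$ is the digit range of \Bpq* and the tail is a geometric series of ratio $q/p$. Covering $\realvalS{\infbhv{\That}}$ by these intervals yields
\[
  \lambda\!\left(\realvalS{\infbhv{\That}}\right)\e\leq\e
  2^{\ell}\,(2q-2)\,\frac{p}{p-q}\,\left(\frac{q}{p}\right)^{\ell}\e=\e
  (2q-2)\,\frac{p}{p-q}\,\left(\frac{2q}{p}\right)^{\ell}
  \e\xrightarrow[\ell\to\infty]{}\e 0 ,
\]
because $p>2q$ makes $2q/p<1$. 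Thus $\overline{\Spq}=\realvalS{\infbhv{\That}}$ is a closed set of measure zero; a closed null set has empty interior, so \Spq* is nowhere dense.

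The easy parts are the reduction and case~(i); the core of the argument, and the only place where the hypothesis is used essentially, is the measure estimate of case~(ii). The point requiring care there is the out-degree bound: it is precisely the replacement of the branching number $\lceil p/q\rceil\geq 3$ of \Tpq* by the branching number~$2$ of \That* that turns the inert estimate $(p/q)^{\ell}(q/p)^{\ell}=1$ for the full tree into the decaying estimate $(2q/p)^{\ell}$. The whole case therefore rests on the edge-deletion that produced \That*, and I would establish the out-degree statement carefully before running the covering argument.
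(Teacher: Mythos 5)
Your proposal is correct, and while part~(i) follows essentially the paper's skeleton, part~(ii) takes a genuinely different route. The shared starting point is the reduction of the theorem to the set $\realvalS{\infbhv{\That}}$: you justify, via \corol{topo-closure} and the continuity of~$\realval{}$ on the compact product space of infinite words over \Bpq*, that this set equals the closure of \Spq*, which the paper uses implicitly in both parts. For~(i) you exploit only the easy inclusion $\infbhv{\Tpq}\subseteq\infbhv{\That}$ (valid since $p<2q$ forces $\Ap\subseteq\Bpq$) together with the cited fact $\realvalS{\Wpq}=[0,\realval{\maxword{0}}]$; this bypasses \lemme{valT=valThat}, which the paper proves in order to get the \emph{reverse} inclusion $\realval{\infbhv{\That}}\subseteq\realval{\infbhv{\Tpq}}$ --- not needed for density as stated, though it is what additionally guarantees that \Spq* itself lies inside $[0,\realval{\maxword{0}}]$ rather than spilling beyond it. For~(ii) the paper argues combinatorially: \lemme{cantor} produces, below any state~$n$, a state~$m$ reachable in \Tpq* but not in \That*, and the uniqueness-of-representation result \cite[Corollary~38]{AkiyEtAl08} then turns the interior of~$\realvalS{W_m'}$ into an open set disjoint from \Spq*, contradicting density in any interval. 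You instead run a covering estimate: the out-degree of \That* is at most~$2$ (one successor when ${n\equiv-1~[q]}$, two otherwise, exactly as the paper records after defining \That*), so at most $2^{\ell}$ prefixes of length~$\ell$ occur, each confining the value to an interval of length at most $(2q-2)\,\frac{p}{p-q}\,(q/p)^{\ell}$ (digit range $2q-2$, tail estimate correct up to a harmless factor $q/p$), whence the Lebesgue measure of $\realvalS{\infbhv{\That}}$ is at most a constant times $(2q/p)^{\ell}\to 0$ for $p>2q$; a closed null set has empty interior, so \Spq* is nowhere dense. Your argument is self-contained --- it avoids both \lemme{cantor} and the representation-uniqueness machinery of \cite{AkiyEtAl08} --- and proves strictly more, namely that the closure of \Spq* has measure zero; what the paper's route buys in exchange is explicit structural information, exhibiting inside every candidate interval a concrete subinterval $\realvalS{W_m'}$ entirely missed by \Spq*.
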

\end{samepage}
\setcounter{thm}{\value{tmpthm}}

The proof of (i) essentially consists in the next Lemma and its corollary,
  stating that even though~$\That$ accepts words that \Tpq* doesn't, their
  values are redundant. 
  
\begin{lemma}\llemma{valT=valThat}
  If~$p<2q-1$, given a finite word~$u$ over~$B$ accepted by \That*, there exists
    a finite word~$v$ over \Ap* such that~$v$ is accepted by 
    \Tpq*,~$\val{u}=\val{v}$
    and~$\wlength{u}=\wlength{v}$.%
    \footnote{%
      The condition on $p$ and $q$ can be relaxed, but the case 
      where~$(p\geq2q-1)$ is trivial 
      and unnecessary in the following.
    }
\end{lemma}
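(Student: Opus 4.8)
The plan is to observe that a word accepted by $\That$ evaluates to the very state it reaches, so its value is a non-negative integer, and then to realise that integer by a canonical $\frac{p}{q}$-representation padded with leading zeros; the whole difficulty then reduces to a length estimate. First I would record that $\val{w}=0\cdot w$ for every word $w$ accepted by $\That$, where $0\cdot w$ denotes the state reached from the initial state~$0$. This is a one-line induction on $\wlength{w}$ from the Horner relation $\val{w'a}=\frac{p}{q}\val{w'}+\frac{a}{q}$ implicit in \requation{pi} together with $\taupq(n,a)=(np+a)/q$, and it remains valid over $\Bpq$ (negative digits included) precisely because $\taupq$ is defined on $\N\times\Z$. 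Hence, writing $N=\val{u}=0\cdot u\in\N$ and letting $d(N)=\wlength{\cod N}$ be the depth of $N$ in $\Tpq$, the word $v=0^{\,\wlength{u}-d(N)}\cod N$ is over $\Ap$, is accepted by $\Tpq$ (the extra leading zeros being read on the loop at state~$0$), satisfies $\val v=N=\val u$, and has length $\wlength u$ — provided only that $d(N)\le\wlength u$.

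The core of the argument is therefore the inequality $d(N)\le\wlength u$. Writing the accepting run as $0=n_0\to n_1\to\cdots\to n_{\wlength u}=N$, I would prove $d(n_j)\le j$ by induction on $j$, the base case being $d(0)=0$. The inductive step rests on locating the tree-parent $qM\div p$ of a successor $M=(np+a')/q$ reached from $n$ by a digit $a'\in\Bpq$: since $qM\bmod p=a'\bmod p$, two cases occur. If $a'\in\Ap$ is non-negative, then $a'\bmod p=a'$, the parent is $n$, and $M$ is the tree-child of $n$, so $d(M)=d(n)+1$. If $a'$ is one of the negative digits of $\Bpq$, then $-p<a'<0$, whence $a'\bmod p=a'+p\in\intint{0}{p-1}$, the parent is $n-1$, and $d(M)=d(n-1)+1$. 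One checks in addition that state~$0$ carries no negative outgoing edge (its only digits are $0$ and $q$), so $n_j-1$ is never~$-1$. Invoking now the monotonicity of the depth — the integers at a fixed depth of $\Tpq$ form an interval, equivalently $n\mapsto\wlength{\cod n}$ is non-decreasing, as established in~\cite{AkiyEtAl08} — gives $d(n_j-1)\le d(n_j)$, so that $d(n_{j+1})\le d(n_j)+1\le j+1$ in both cases, which closes the induction.

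The main obstacle is the negative-digit step, and more precisely pinning down why the borrow is by exactly~$1$ (parent $n-1$ rather than lower): this is the elementary fact that $\Bpq=\intint{p-2q+1}{p-1}$ lies inside $(-p,p)$, since $p-2q+1>-p$ reduces to $2(p-q)+1>0$. The hypothesis $p<2q-1$ enters exactly here, guaranteeing both that negative digits are present and that each induces a single borrow; the complementary case $p\ge 2q-1$ is trivial, as then $\Bpq\subseteq\Ap$, the automaton $\That$ is obtained from $\Tpq$ by deleting edges, and one simply takes $v=u$. The one external ingredient is the monotonicity of the depth function of $\Tpq$ imported from~\cite{AkiyEtAl08}; everything else is direct computation with $\taupq$ and the evaluation formula.
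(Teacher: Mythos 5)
Your proof is correct, and it takes a genuinely different route from the paper's. The paper's proof first reduces, by an induction on the number of letters of~$u$ lying outside~$\Ap$, to the special case~$u\in\Aps\xmd\Bpq$; it then writes~$u=u'b$ and~$\cod{n}=v'a$ and concludes by the chain~$\wlength{v'}=\wlength{\cod{\val{v'}}}\leq\wlength{\cod{\val{u'}}}\leq\wlength{u'}$, comparing canonical representations only at the last letter. You instead dispense with the reduction and run a single induction along the accepting run, proving~$\wlength{\cod{n_j}}\leq j$ via a borrow analysis: a non-negative digit sends~$n_j$ to a tree-child, while a negative digit~$a'$ gives~$q\xmd n_{j+1}=(n_j-1)\xmd p+(a'+p)$ with~$a'+p\in\Ap$, so the tree-parent of~$n_{j+1}$ is exactly~$n_j-1$ — a single borrow, guaranteed by~$\Bpq\subseteq(-p,p)$, i.e.\ by~$2(p-q)+1>0$. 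It is worth noting that both arguments ultimately rest on the \emph{same} imported fact, the monotonicity of~$n\mapsto\wlength{\cod{n}}$ (equivalently, that the levels of~$\Tpq$ are integer intervals): you invoke it as~$\wlength{\cod{n_j-1}}\leq\wlength{\cod{n_j}}$, and the paper uses it implicitly in the middle inequality of its chain — where, incidentally, its item~2 should read~$\val{v'}\leq\val{u'}$ rather than the reverse, since~$p\left(\val{u'}-\val{v'}\right)=a-b\in\set{0,p}$; your bookkeeping makes this borrow-by-one mechanism explicit where the paper leaves it compressed. What your version buys is uniform treatment of arbitrarily many negative digits, an explicit localisation of where~$p<2q-1$ and the geometry of~$\Bpq$ enter, and the (correct, and needed) observation that state~$0$ carries only the outgoing digits~$0$ and~$q$, so the borrow never falls off the tree; what the paper's version buys is brevity, since after the reduction only one letter requires attention. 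Two harmless slips in yours: in the non-negative case the equality~$d(M)=d(n)+1$ fails for the loop~$0\pathx{0}0$ (where~$\cod{0}=\varepsilon$), but your induction only uses the inequality~$d(M)\leq d(n)+1$, which does hold; and, as you yourself note, the hypothesis~$p<2q-1$ is not needed for the borrow bound itself — it only ensures negative digits exist at all, the case~$p\geq2q-1$ being trivial with~$v=u$, in agreement with the paper's footnote.
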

\begin{proof} 
  Through a simple induction, one can reduce the statement to the special
    case where~$u$ is part of~$\Aps\xmd B$.
  We denote by~$n$ the non-negative integer~$\val{u}$. 
  It is then enough to prove that~$\wlength{\cod{n}}\leq\wlength{u}$, since 
    setting~${v=0^k\cod{n}}$ would satisfy both equations.
    
  We denote by~$u'$ (resp.~$v$) the word in \Aps* and by~$b$ (resp.~$a$) the 
    letter in~$B$ (resp. \Ap*) such that~$u=u'b$, (resp.~$\cod{n}=v'a$).
  
  Since~~~\begin{minipage}[t]{0.75\linewidth}
  \noindent 1. $v'$ is the representation of the integer~$\val{v'}$, \\
  \noindent 2. $\val{u'}$ is smaller than~$\val{v'}$, \\
  \noindent 3. $u'$ is in \Aps*; 
  \end{minipage}%
  $$\wlength{v'}\e \overset{1}{=}    \e \wlength{\cod{\val{v'}}} \e 
                   \overset{2}{\leq} \e \wlength{\cod{\val{u'}}} \e 
                   \overset{3} \leq \e \wlength{u'} \e$$
  hence~$\wlength{\cod{n}}\leq\wlength{u}$.
\end{proof}

\begin{corollary}\lcorollary{valT=valThat}
  If~$p<2q$, $\realval{\infbhv{\That}}=\realval{\infbhv{\Tpq}}$. %
  \footnote{%
    Here however the condition on~$p$ and $q$ is mandatory. 
  }
\end{corollary}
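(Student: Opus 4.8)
The plan is to show that the two value sets coincide by proving that each equals the closed interval $[0,\realval{\maxword{0}}]$. First I would recall from \cite{AkiyEtAl08} that $\realval{\infbhv{\Tpq}}=[0,\realval{\maxword{0}}]$; since this interval is \emph{closed}, it suffices to establish the two inclusions $\realval{\infbhv{\Tpq}}\subseteq\realval{\infbhv{\That}}$ and $\realval{\infbhv{\That}}\subseteq[0,\realval{\maxword{0}}]$. The case $p=2q-1$ is trivial, for then $\Bpq=\Ap$ and $\That=\Tpq$; so I assume $p<2q-1$, which is exactly the hypothesis of \lemme{valT=valThat}. The first inclusion is immediate: as $p<2q-1$ forces $p-(2q-1)<0$, we have $\Ap\subseteq\Bpq$, and since $\Tpq$ and $\That$ share the same transition function, every infinite branch of $\Tpq$ is also one of $\That$; hence $\infbhv{\Tpq}\subseteq\infbhv{\That}$ and the inclusion of values follows.

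For the reverse inclusion I would fix an infinite word $w$ accepted by $\That$ and, for each $n$, let $u_n$ denote its length-$n$ prefix. Each $u_n$ is a finite word over $\Bpq$ accepted by $\That$, so \lemme{valT=valThat} supplies a finite word $v_n$ over $\Ap$, of the same length $n$, accepted by $\Tpq$, with $\val{v_n}=\val{u_n}$. Writing $N_n=\val{u_n}=\val{v_n}$, this is a non-negative integer (it is the state reached after reading $u_n$), and the crux is that the finite $\pq$-value carried by the enlarged alphabet is already realised inside $\Tpq$.

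It then remains to transport this finite coincidence to the real value by a limiting argument. The key identity is that, for any word $c_1\cdots c_n$,
\begin{equation}
  q\,\left(\qps\right)^{n}\val{c_1\cdots c_n}=\sum_{i=1}^{n}c_i\left(\pq\right)^{-i}\eqpnt
  \notag
\end{equation}
so that $\realval{w}=\lim_{n}q\,(\qps)^{n}N_n$, the series converging because $\pq>1$. Now $v_n$ is a length-$n$ branch of $\Tpq$ reaching the state $N_n$, and the maximal word $\maxword{0}$ maximises, at each depth, the state reached among all branches of $\Tpq$ (this is the defining monotonicity of the maximal branch used in \cite{AkiyEtAl08}); hence $0\leq N_n\leq M_n$, where $M_n$ is the state reached in $\Tpq$ by the length-$n$ prefix of $\maxword{0}$. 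Multiplying by the positive factor $q\,(\qps)^{n}$ and letting $n\to\infty$, the same identity gives $q\,(\qps)^{n}M_n\to\realval{\maxword{0}}$, so $0\leq\realval{w}\leq\realval{\maxword{0}}$, i.e. $\realval{w}\in[0,\realval{\maxword{0}}]=\realval{\infbhv{\Tpq}}$. Combining the two inclusions yields $\realval{\infbhv{\That}}=\realval{\infbhv{\Tpq}}$. The only genuine obstacle is this finite-to-infinite passage: one must verify that the $\pq$-values of the growing prefixes, renormalised by $(\qps)^{n}$, converge to $\realval{w}$ and that the uniform bounds survive this geometric renormalisation — after which, the interval $[0,\realval{\maxword{0}}]$ being closed, mere membership in it is all the limit needs to establish.
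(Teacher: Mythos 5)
Your proof is correct and follows essentially the route the paper intends: the paper states the corollary without explicit proof precisely because \lemme{valT=valThat} is its engine, and your completion --- splitting off the trivial case $p=2q-1$, getting $\infbhv{\Tpq}\subseteq\infbhv{\That}$ from $\Ap\subseteq\Bpq$, and transferring prefix values via the lemma into the renormalised limit $\realval{w}=\lim_n q\,(\qps)^n\val{u_n}$, bounded in the closed interval $[0,\realval{\maxword{0}}]=\realval{\infbhv{\Tpq}}$ of \cite[Theorem~2]{AkiyEtAl08} --- is exactly the intended (and valid) derivation. The finite-to-infinite passage you flag as the ``only genuine obstacle'' indeed goes through, since the digits are bounded and $\qps<1$ makes the renormalised partial sums converge while preserving the uniform bounds $0\leq\val{u_n}\leq M_n$.
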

\begin{proof}[Proof of \rtheorem{spa-top}.(i)]
  Since~$\infbhv{\That}$ is the topological closure of the span words,
    and that (from \rcorollary{valT=valThat})~$\realval{\infbhv{\That}}=\realval{\infbhv{\Tpq}} = W_0$, the 
    set~$\set{\nspan(n)~|~n\in\N}$ is dense in~$W_0=[0,\realval{\maxword{0}}]$.
\end{proof}

The proof of \rtheorem{spa-top}.(ii) requires more notation. 
For all integer~$n$, we denote by~$W_n'$ the set of 
  words~$\msp0^*\cod{n}W_n$, \ie
  the (infinite) words of~$\Lc(\Tpq)$ whose run passes through the state~$n$.
In particular, with this notation, if~$n\pathx{u}m$, then~$W_m'\subseteq W_n'$, 
  which ``basically'' reduces \rtheorem{spa-top}.(ii) to the following 
  statement.

\begin{lemma}\llemma{cantor}
  If $p>2q$, for every integer~$n$, there exists an integer~$m$ 
    such that~$m$ is reachable from~$n$ in $\Tpq$ but not in \That*.
\end{lemma}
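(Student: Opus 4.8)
First I would record how reachability behaves in the two graphs. Since $\taupq$ sends $n$ by a letter $a$ to $(np+a)/q$, each integer $m\geq 1$ has in \Tpq* a single incoming edge, coming from $\lfloor mq/p\rfloor$ and carrying the label $mq \bmod p$; thus \Tpq* is a tree and between any two of its nodes there is at most one directed path. When $p>2q$ the alphabet $\Bpq=\intint{p-2q+1}{p-1}$ is obtained from \Ap* by discarding the smallest digits $0,1,\ldots,p-2q$, so \That* is exactly \Tpq* with every edge whose label is at most $p-2q$ deleted. Consequently $m$ is reachable from $n$ in \That* if and only if the unique \Tpq*-path from $n$ to $m$ uses only labels of \Bpq*. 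The lemma therefore reduces to exhibiting a descendant $g$ of $n$ that is reachable from $n$ by a path of \Bpq*-labels and that still has, in \Tpq*, an outgoing edge whose label lies outside \Bpq*: its endpoint $m$ is then reachable from $n$ in \Tpq*, but not in \That*, since the last edge of its (unique) path is deleted.

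Call a node \emph{good} if it has at least three children in \Tpq*. Because $p>2q$ gives $\frac{p}{q}>2$, every node of \Tpq* has at least two children; and since a node has only one or two children in \That*, at most two children of any node survive there, so a good node necessarily has an outgoing edge that is deleted in \That*. Let $S$ be the set of nodes reachable from $n$ by \Bpq*-labels; I claim $S$ contains a good node. Suppose not. Then every node of $S$ has exactly two children in \Tpq* and both are reached by \Bpq*-edges, hence lie again in $S$; starting from $n$ this shows by induction that $S$ is the whole \Tpq*-subtree rooted at $n$ and that it is a complete binary tree, so it has exactly $2^{k}$ nodes at depth $k$.

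The contradiction comes from counting descendants. The children of a node $j$ are the integers $m$ with $jp\leq mq<(j+1)p$, an interval; taking the union over an interval $[L,R]$ of parents yields the interval $\{\,m : Lp\leq mq<(R+1)p\,\}$. By induction the depth-$k$ descendants of $n$ form an integer interval, and writing $N_{k}$ for its cardinality the elementary rounding estimate gives $N_{k+1}>\frac{p}{q}N_{k}-1$ (each endpoint is scaled by $\frac{p}{q}$ up to a ceiling). Combined with the complete-binary identity $N_{k+1}=2N_{k}$ this yields $(\frac{p}{q}-2)N_{k}<1$ for all $k$; since $\frac{p}{q}-2>0$ is fixed and $N_{k}=2^{k}\to\infty$, this is impossible. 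Hence $S$ contains a good node $g$, and the endpoint $m$ of one of its deleted outgoing edges is the integer required by the lemma.

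The routine verifications are that \Bpq*, being $2q-1$ consecutive digits, meets each residue class modulo $q$ in at most two digits (so that a good node really loses a child in \That*), and the rounding estimate producing $N_{k+1}>\frac{p}{q}N_{k}-1$. The only genuinely delicate point is the counting step: one must compare the true exponential growth rate $\frac{p}{q}$ of the interval of depth-$k$ descendants against the rate $2$ of a complete binary tree. This is exactly where the hypothesis $p>2q$ (rather than merely $p\geq 2q-1$) is used, and it explains why no such $m$ can be guaranteed at $n$ itself but only somewhere in its subtree.
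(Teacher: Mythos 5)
Your strategy is sound and genuinely different from the paper's, but one step is false as written. From ``$S$ contains no good node'' you infer that every node of $S$ ``has exactly two children in \Tpq* and both are reached by \Bpq*-edges''. Non-goodness bounds the \emph{number} of children, not their \emph{labels}, and the second half of the inference fails: whenever $2q<p<3q$, every state $j\equiv-1\pmod{q}$ has exactly two children in \Tpq*, via the labels $p-2q$ and $p-q$, and $p-2q\notin\Bpq$. Such states are non-good yet lose an edge --- this is exactly the paper's remark that states congruent to $-1$ modulo~$q$ have a unique successor in \That*. For instance, in base~$\frac{7}{3}$ (where $B_{7,3}=\intint{2}{6}$) the state~$2$ has exactly two children, $5$ and~$6$, via labels $1$ and~$4$; since $1\notin B_{7,3}$, a two-child node of $S$ need not pass both children into $S$, and your induction showing that $S$ is the complete binary subtree of all descendants collapses at the first such node. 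The repair is immediate, because the overlooked case is favourable: assume instead, toward contradiction, that \emph{no} node of $S$ has an outgoing \Tpq*-edge labelled outside \Bpq*. Since all outgoing labels of a given state are congruent modulo~$q$, and \Bpq* consists of $2q-1$ consecutive digits, each state has at most two outgoing \Bpq*-edges; under this assumption every node of $S$ therefore has exactly two children (at least two because $p/q>2$), both in $S$, and your counting contradiction $\left(\frac{p}{q}-2\right)N_k<1$ with $N_k=2^k$ runs verbatim. Any node violating the assumption is precisely the $g$ you were seeking, so the lemma follows either way. (Note that your intermediate claim that $S$ must contain a \emph{good} node is neither established by the argument nor needed.)

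Once patched, your proof is correct, and its key step differs from the paper's. The paper also uses the observation that the depth-$i$ descendants of $n$ form an integer interval $S_i$, but replaces your growth-rate comparison with a pigeonhole: since $p>2q$, $\card{S_i}$ increases strictly with~$i$, so $S_{p+1}$ has more than $p$ elements and hence contains a multiple $m$ of $p$; the unique incoming edge of $m$ carries the label $mq\bmod p=0$, which is not in \Bpq*, so $m$ is unreachable in \That* from anywhere. The paper's argument is shorter, names an explicit depth, and yields a slightly stronger conclusion ($m$ has no incoming \That*-edge at all), whereas yours needs the out-degree bound for \That* but isolates the underlying mechanism --- \That* cannot sustain the $(p/q)^k$ growth of \Tpq* with out-degree at most~$2$ --- which makes the role of the threshold $p=2q$ more transparent.
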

\begin{proof}
  We denote by~$S_i$ the set~$\set{n'~|~n\pathx{u}n'\text{ and } \wlength{u}=i}$.
  For all~$i$, $S_i$ is an integer interval, and 
    since~$p>2q$,~$\card{S_i}$ increases strictly with~$i$.
  It follows that~$S_{p+1}$ contains at least an integer~$m$ congruent to~$0$ 
    modulo~$p$ (beware, it is~$p$ and not~$q$).
  The state~$m$ is reachable in \Tpq* by a unique transition labelled by~$0$,
    and since~$0$ is not in~$B$ (because~$p>2q$),~$k$ 
    is not reachable in~$\That$.
\end{proof}

\begin{proof}[Proof of \rtheorem{spa-top}~(ii)]

  We denote by S the set $\set{\nspan(n)~|~n\in\N}$, and for all~$n$ 
    in~$\N$, we denote by~$W_n'$ the set of words~$\cod{n}W_n$. 
  
  Let us assume that~$S$ is dense in an 
    interval~$[x,y]$. 
  There exists a positive integer~$n$ such that~$\cod{n}\minword{n}$ 
    and~$\cod{n}\maxword{n}$ are both 
    in~$[x,y]$, hence~$S$ is dense in~$\realval{W_n'}$.
  From \rlemma{cantor}, there exists an integer~$m$ reachable from~$n$ in \Tpq*
    but not in \That*, hence no word of~$W_m'$ is accepted by \That*.
  The real values of theses words form a (non-trivial) sub-interval of~$\realval{W_n'}$.
  
  From~\cite[Corollary~38]{AkiyEtAl08} we know that every real number 
    has either one $\pq$-representation, or two, in which case one 
    is~$\cod{i+1}\minword{i+1}$ 
    and the other~$\cod{i}\maxword{i}$ for some~$i$.
  It implies that every word of the interior of~$\realval{W_{m}'}$
    has no $\pq$-representation outside of~$W_{m}'$, hence
    $\realval{W_n'}$ contains an open set whose intersection with~$S$ 
	is empty, a contradiction. 
\end{proof}
\end{document}